\pdfoutput=1
\documentclass[a4paper,UKenglish,thm-restate,pdfa]{lipics-v2021-modified}

\usepackage{mathtools}					
\usepackage{xargs}                      
\usepackage[pdftex,dvipsnames]{xcolor}  
\usepackage{algorithm}
\usepackage{algpseudocode}
\usepackage[colorinlistoftodos,prependcaption,textsize=tiny]{todonotes}

\def\P{\ensuremath{\mathrm{P}}}

\def\NP{\ensuremath{\mathrm{NP}}}

\def\E{\ensuremath{\mathrm{E}}}

\def\NE{\ensuremath{\mathrm{NE}}}

\def\NEE{\ensuremath{\mathrm{NEE}}}
\def\FP{\ensuremath{\mathrm{FP}}}
\def\UP{\ensuremath{\mathrm{UP}}}
\def\DisjNP{\ensuremath{\mathrm{DisjNP}}}

\def\coNP{\ensuremath{\mathrm{coNP}}}

\def\coNE{\ensuremath{\mathrm{coNE}}}
\def\coNEE{\ensuremath{\mathrm{coNEE}}}

\def\TFNP{\ensuremath{\mathrm{TFNP}}}

\def\SPARSE{\ensuremath{\mathrm{SPARSE}}}

\def\TAUT{\ensuremath{\mathrm{TAUT}}}

\def\TFNP{\ensuremath{\mathrm{TFNP}}}

\def\PSPACE{\ensuremath{\mathrm{PSPACE}}}
\def\PH{\ensuremath{\mathrm{PH}}}
\def\NTIME{\ensuremath{\mathrm{NTIME}}}

\def\coNTIME{\ensuremath{\mathrm{coNTIME}}}

\def\N{\ensuremath{\mathrm{\mathbb{N}}}}

\def\leqmpp{\ensuremath{\leq_\mathrm{m}^\mathrm{pp}}}
\def\leqmp{\ensuremath{\leq_\mathrm{m}^\mathrm{p}}}
\def\psim{\ensuremath{\leq^\mathrm{p}}}

\DeclareMathOperator{\dom}{dom}
\DeclareMathOperator{\ran}{ran}

\DeclareMathOperator{\fp}{fp}

\def\sqsubsetneq{\mathrel{\sqsubseteq\kern-0.92em\raise-0.15em\hbox{\rotatebox{313}{\scalebox{1.1}[0.75]{\(\shortmid\)}}}\scalebox{0.3}[1]{\ }}}
\def\sqsupsetneq{\mathrel{\sqsupseteq\kern-0.92em\raise-0.15em\hbox{\rotatebox{313}{\scalebox{1.1}[0.75]{\(\shortmid\)}}}\scalebox{0.3}[1]{\ }}}
\newcommand{\card}[1]{\##1}
\def\runtime{\ensuremath{\mathrm{time}}}

\newcommand{\xTrueThenElse}[2]{\if\x1{#1}\else{#2}\fi}

\newcommandx{\unsure}[2][1=]{\todo[linecolor=red,backgroundcolor=red!25,bordercolor=red,#1]{#2}}
\newcommandx{\change}[2][1=]{\todo[linecolor=blue,backgroundcolor=blue!25,bordercolor=blue,#1]{#2}}
\newcommandx{\info}[2][1=]{\todo[linecolor=OliveGreen,backgroundcolor=OliveGreen!25,bordercolor=OliveGreen,#1]{#2}}
\newcommandx{\improvement}[2][1=]{\todo[linecolor=Plum,backgroundcolor=Plum!25,bordercolor=Plum,#1]{#2}}
\newcommandx{\thiswillnotshow}[2][1=]{\todo[disable,#1]{#2}}
\algdef{SE}[SUBALG]{Indent}{EndIndent}{}{\algorithmicend\ }%
\algtext*{Indent}
\algtext*{EndIndent}
\hideLIPIcs
\title{Recursive Jump Operators and Optimal Proof Systems}
\titlerunning{Recursive Jump Operators and Optimal Proof Systems}

\author{Fabian Egidy}{University of Würzburg, Germany}{fabian.egidy@uni-wuerzburg.de}{https://orcid.org/0000-0001-8370-9717}{supported by the German Academic Scholarship Foundation.}
\authorrunning{F.~Egidy}
\Copyright{Fabian Egidy}
\nolinenumbers
\ccsdesc[500]{Theory of computation~Oracles and decision trees}
\ccsdesc[500]{Theory of computation~Proof complexity}
\ccsdesc[300]{Theory of computation~Complexity classes}
\keywords{Relativization, Oracles, Proof Complexity, Optimal Proof Systems, Jump Operators}
\acknowledgements{We wish to thank Christian Glaßer for his permanent advice and helpful feedback.}

\begin{document}
\maketitle
\begin{abstract}
    We study the relationship between the existence of optimal proof systems and recursive jump operators, two central open problems in proof complexity. For a set $L$, an optimal proof system is a strongest proof system in terms of proof length, whereas a recursive jump operator uniformly transforms any proof system for $L$ into a stronger one with respect to proof length, thereby witnessing non-optimality. It is clear that the existence of a recursive jump operator for $L$ rules out optimal proof systems for $L$. Khaniki (FOCS 2024) is interested in the converse of this implication and explicitly poses the following question, where $\TAUT$ denotes the set of propositional tautologies.
    \medskip
    \begin{itemize}
        \item[] Q:~~Does the non-existence of optimal proof systems for $\TAUT$ imply the existence of\\
        \phantom{Q:~~}recursive jump operators for $\TAUT$?
    \end{itemize}
    \medskip
    We generalize and address this question from both a relativized and an unrelativized perspective. We show that proving a positive answer for Q is provably hard by constructing the following oracle.
    \medskip
    \begin{itemize}
        \item[] $O$:~~The polynomial-time hierarchy is infinite, $\TAUT$ has no optimal proof systems, and\\
        \phantom{$O$:~~}$\TAUT$ has no recursive jump operators.
    \end{itemize}
    \medskip
    This shows that Khaniki's question can not be answered in the positive by relativizable means, even under the standard complexity-theoretic assumption that the polynomial-time hierarchy is infinite. 
    
    In contrast, we obtain positive results when the question Q is posed for sets different from $\TAUT$. We prove that the existence of recursive jump operators is upward closed under $\leqmp$-reducibility, a result that so far was only known for the non-existence of optimal proof systems.
    Furthermore, we show that the sets known to have no optimal proof systems by Messner (STACS 1999) in fact admit recursive jump operators. Thus, essentially all sets currently known to have no optimal proof systems have recursive jump operators.

\end{abstract}
\section{Introduction}
This paper studies the relationship of two longstanding open problems concerning Cook-Reckhow \cite{cr79} proof systems. A Cook-Reckhow proof system for a set $L$ is a polynomial time computable function $f$ whose range is $L$. Two such proof systems for the same set are compared with each other using the notion of simulation. A proof system $f$ simulates a proof system $g$ if there exists a polynomially-bounded total function $\pi$ such that $f(\pi(x)) = g(x)$ for all $x$. Intuitively, the proofs of $f$ are at most polynomially longer than the proofs of $g$. By Krajíček and Pudlák \cite{kp89}, a proof system $f$ for a set $L$ is optimal if $f$ simulates all proof systems for $L$. In other words, an optimal proof system for $L$ has at most polynomially longer proofs than any other proof system for $L$. 

One of the main open questions in proof complexity is whether $\TAUT$ has optimal proof systems, where $\TAUT$ denotes the set of propositional tautologies (cf.~\cite{kp89,kra19a}). In fact, it is even open whether there exists any set outside $\NP$ that has an optimal proof system (cf.~\cite{sad97,km98,mes00}).
While it is unclear which answer to this general question is to be expected \cite{hir10,bs11,kra19a}, with respect to $\TAUT$ a negative answer is conjectured \cite{kp89}.
\begin{spreadlines}{2ex}
\begin{align*}
&\text{\textbf{C1}: {\em $\TAUT$ has no optimal proof system.}}
\end{align*}
\end{spreadlines}

Another concept in proof complexity are recursive jump operators. A jump operator for a set $L$ is a function such that for any proof system $f$ for $L$, $J(f)$ is a proof system for $L$ that $f$ cannot simulate. If additionally, $J$ is recursive, we call $J$ a recursive jump operator for $L$\footnote{For improved clarity, we occasionally use the term ``non-recursive jump operator'' for a general jump operator.}. Intuitively, a recursive jump operator computes witnesses of non-optimality for given proof systems.

The first candidate jump operators for $\TAUT$ were proposed by Krajíček and Pudlák \cite{kp89} and implicitly already by Buss \cite{bus85}. Recently, Khaniki \cite{kha24} proposed a new candidate jump operator based on interactive proofs. Additionally, Khaniki is interested in the relationship between jump operators
and optimal proof systems.
Let us present a conjecture which was already implicitly contained in the work of Krajíček and Pudlák \cite{kp89} and explicitly formulated by Khaniki\footnote{Khaniki formulated the conjecture using partial recursive jump operators, but simultaneously shows their existence to be equivalent to the existence of recursive jump operators.} \cite{kha24}.
\begin{spreadlines}{2ex}
\begin{align*}
&\text{\textbf{C2}: {\em $\TAUT$ has a recursive jump operator.}}
\end{align*}
\end{spreadlines}
It is immediate that the existence of an optimal proof system for $\TAUT$ is equivalent to $\TAUT$ possessing no non-recursive jump operators \cite{kp89}. Similarly, it is clear that the existence of a recursive jump operator for $\TAUT$ implies that $\TAUT$ does not have optimal proof systems. So conjecture C2 implies conjecture C1. However, it is open whether the converse implication holds. This question is explicitly posed by Khaniki \cite[Problem 10]{kha24} and we state it as question Q1 below. We also generalize question Q1 to any set $L \notin \NP$\footnote{For all non-empty $L \in \NP$ it is clear that they have optimal proof systems and no jump operators.}.
\begin{spreadlines}{2ex}
\begin{align*}
&\text{\textbf{Q1}: {\em Does conjecture {\em C1} imply conjecture {\em C2}?}}
\\
&\text{\textbf{Q2}: {\em Does the non-existence of optimal proof systems for $L$ imply the existence of}}
\\[-2.2ex]
&\phantom{\text{\textbf{Q2}: }} \text{{\em recursive jump operators for $L$?}}
\end{align*}
\end{spreadlines}
Positive answers to these questions would show that non-optimality of proof systems always comes with a procedure computing witnesses of non-optimality, while a negative answer would separate existential and constructive notions of non-optimality in proof complexity.

In this paper we contribute to questions Q1 and Q2. 
We obtain hardness results for question Q1, but partial positive answers for question Q2. 
Before we make our contributions precise, we give an overview over Cook-Reckhow proof systems in complexity theory as well as the conjectures C1 and C2.

\subsection{Previous Work}

In the past 35 years optimal proof systems have been studied extensively and have shown to be connected to a wide range of research areas, including the separation of complexity classes \cite{cr79, kp89, km98, mes99, mes00, kra04b, kmt03, cfm14}, the existence of complete sets for promise classes \cite{raz94, sad97, kmt03, bkm09, pud17}, the existence of optimal acceptors \cite{kp89,sad99, mes99, mes00}, mathematical logic \cite{kp89, kra95, pud98, pud17}, descriptive complexity \cite{cf10a}, parameterized complexity \cite{cf10b}, learning theory \cite{ps22} and more (cf.~Krajíček's book on proof complexity \cite{kra19a}). We focus on the results from structural complexity theory that are most relevant to our work and refer the reader to the literature above for a comprehensive overview of optimal proof systems.

\subparagraph{(Non)-existence of optimal proof systems.} It is an easy fact that all sets in $\NP$ have optimal proof systems. Messner \cite{mes99} shows that there are sets without optimal proof systems. He proved that for any non-polynomial time-constructible function $t$ there are sets in $\coNTIME(t)$ that do not have optimal proof systems and that all $\leqmp$-hard sets for $\coNE$ do not have optimal proof systems. Surprisingly, this already covers all that we unconditionally know about the existence of optimal proof systems. Messner \cite{mes00} also shows that if $\NP = \coNP$, then arbitrary complex sets with optimal proof systems exist. In contrast, Egidy and Glaßer \cite{eg25} construct two oracles: relative to the first all sets outside non-deterministic quasi-polynomial time do not have optimal proof systems and relative to the second the polynomial-time hierarchy is infinite and all sets in $\PSPACE \setminus \NP$ do not have optimal proof systems. Hence, for most sets we do not know whether they have optimal proof systems and proving their existence would require to overcome at least the relativization barrier.

\subparagraph{Proof systems for TAUT.} The most interesting and best-studied proof systems are proof systems for $\TAUT$.
Krajíček and Pudlák \cite{kp89} showed that $\NE = \coNE$ implies the existence of optimal proof systems for $\TAUT$, which was improved to $\NEE = \coNEE$ by Köbler, Messner, and Torán \cite{kmt03}. Razborov \cite{raz94} discovered the first connection between optimal proof systems and promise classes by showing that the existence of optimal proof systems for $\TAUT$ implies the existence of $\leqmpp$-complete sets for $\DisjNP$, i.e., the class of disjoint $\NP$-pairs defined by Selman \cite{sel88} and Selman and Grollman \cite{gs88}. Initiated by this result, many further connections between proof systems for $\TAUT$ and disjoint $\NP$-pairs were obtained \cite{pud03, bey04, gssz04, bey06, bey07, gsz07, gsz09, bey09, bey10, bs11, ghsw23}. As already mentioned above, the relationship between proof systems and promise classes has turned out to be quite fruitful. These connections in addition to various connections of optimal proof systems to bounded arithmetic (cf.~\cite{kra95, pud98, kra19a}) lead into a research program initiated by Pudlák \cite{pud17}. Pudlák surveys on several conjectures concerning the existence of optimal proof systems for sets (including conjecture C1) and the existence of complete sets for promise classes (like $\UP$, $\NP \cap \coNP$, $\TFNP$) and asks for a systematic investigation of their relationships. In particular, Pudlák is interested in either proving further implications between conjectures or disproving them relative to an oracle. Since then several implications have been disproved relative to an oracle \cite{deg24, dg20, dos20b, dos20a, eeg22, gssz04, kha22}. 

\subparagraph{Optimal proof systems and recursive jump operators.} One approach to rule out the existence of optimal proof systems for $\TAUT$ is to provide a computable procedure that takes an arbitrary proof system $f$ and produces an improved system that efficiently proves tautologies for which $f$ admits only long proofs. Khaniki \cite{kha24} calls such procedures efficient jump operators. Several candidate efficient jump operators have been proposed \cite{kp89, bus85, kra04a, kha24} in the hope that they are useful to prove that even strong proof systems like extended Frege are not optimal. A similar concept are hard tautology generators, that informally compute families of tautologies that a given proof system requires super-polynomial size proofs on (cf.~\cite[Def.~3.2]{kha24} for a precise definition).

The existence of non-recursive jump operators for $\TAUT$ is equivalent to the non-existence of optimal proof systems for $\TAUT$ which is equivalent to the existence of hard tautology generators \cite{kp89, kha24}. Khaniki \cite{kha24} extends this relationship to the ``efficient'' setting, i.e., Khaniki shows that the existence of all of the following concepts are equivalent: partial recursive jump operators, recursive jump operators, polynomial time computable jump operators, partial recursive hard tautology generators, recursive hard tautology generators, polynomial time computable hard tautology generators. Khaniki \cite{kha24} explicitly asks whether the existence of recursive jump operators is also equivalent to the non-existence of optimal proof systems for $\TAUT$, to which our main result contributes to.

\subsection{Our Contribution}
We address the questions Q1 and Q2 from a relativized and an unrelativized perspective. 

\subparagraph{Contribution to Q1.}
Our main contribution to Q1 is the construction of the following oracle (cf.~Corollary \ref{cor:main-result}), which shows that proving a positive answer for Q1 is provably hard and answers an open question by Khaniki \cite{kha24}. 
\begin{spreadlines}{2ex}
\begin{align*}
&\text{$O$: } \text{ The polynomial-time hierarchy is infinite, $\TAUT$ has no optimal proof systems,}
\\[-2.2ex]
&\phantom{\text{$O$: }} \text{ and $\TAUT$ has no recursive jump operators}
\end{align*}
\end{spreadlines}
The oracle $O$ answers the question posed by Khaniki \cite[Problem 10]{kha24} in the sense that no relativizable proof can show the implication Q1. Even more, the hardness of Q1 remains true even if one assumes 
that the polynomial-time hierarchy is infinite. Note that from the point of view of answering Q1 in the negative, results like ours are the best we can hope for, because actually proving Q1 in the negative implies a proof of conjecture C1 and thus $\NP \neq \coNP$, which currently seems completely out of reach. Hence, a separation via oracles is a standard approach to separate two such longstanding complexity theoretic conjectures from each other, e.g., as reflected by Pudlák's \cite{pud17} approach to separate conjectures in his research program.

\subparagraph{Contribution to Q2.} In contrast to Q1, we obtain partial positive results regarding Q2, namely for sets currently known to lack optimal proof systems. Messner \cite{mes99} shows the existence of sets that do not have optimal proof systems. Together with the closure under $\leqmp$-reducibility for the class of all sets with optimal proof systems proved by Köbler and Messner \cite{km98}, Messner obtains that all $\leqmp$-hard sets for $\coNE$ do not have optimal proof systems. 

First, we show that the existence of recursive jump operators is upward closed under $\leqmp$-reducibility, i.e., the existence of recursive jump operators for $L$ implies the existence of recursive jump operators for any set $L'$ such that $L \leqmp L'$ (cf.~Theorem \ref{thm:jump-operator-reduction}). This is a dual result to the $\leqmp$-closure result of Köbler and Messner. Second, we analyze Messner's proof against the existence of optimal proof systems for sets. By doing so, we derive a recursive jump operator for these sets (cf.~Theorem \ref{thm:rec-jump-op-exist}). In total, our results show that any set shown to not have optimal proof systems by Messner \cite{mes99} in fact has recursive jump operators. This covers essentially all sets currently known to have no optimal proof systems.

\subsection{Sketch of the Oracle Construction}
We sketch the ideas behind the oracle construction. Our main goal is to obtain an oracle relative to which 
\begin{itemize}
    \item the polynomial-time hierarchy is infinite, 
    \item $\TAUT$ has no optimal proof systems, 
    \item $\TAUT$ has no recursive jump operators. 
\end{itemize}
We achieve this by constructing a sparse oracle relative to which there are no $\leqmpp$-complete disjoint $\NP$-pairs and no recursive jump operators for $\TAUT$. By a result of Razborov \cite{raz94}, our oracle then also has no optimal proof systems for $\TAUT$. Using the approach of Egidy and Glaßer \cite{eg25}, we can combine our sparse oracle with an oracle of Yao \cite{yao85} relative to which the polynomial-time hierarchy is infinite and obtain our desired oracle with combined properties.

We outline the construction of the sparse oracle by explaining how to achieve its two main properties separately. Combining both properties into a single oracle requires additional technical care and is achieved using a priority argument, which we defer to the full construction.

\subparagraph{Towards no recursive jump operators for TAUT.} Let $J_1, J_2, \dots$ and $P_1, P_2, \dots$ notations for the same standard enumeration of Turing transducers. For clarity, we use the enumeration $J_i$ when referring to candidate recursive jump operators and $P_i$ when referring to candidate proof systems. We treat all $J_i$ as candidate recursive jump operators for $\TAUT$ and diagonalize against them. Let $J_i$ be an arbitrary such candidate. 

The key idea is to construct a witness proof system $W_i$ that evaluates $J_i$ on its own program code and neutralizes $J_i$'s jump. We achieve this by using an effective version of Kleene's \cite{Kle38, Kle52, Soa87} fixed-point theorem. Informally, the Turing transducer $W_i$ works as follows:
\begin{itemize}
    \item $W_i$ computes $J_i(W_i) = b$, which is possible by an invocation of the fixed-point theorem.
    \item Depending on carefully encoded information in the oracle, $W_i$ may either:
    \begin{itemize}
        \item simulate the allegedly stronger proof system $P_b$, or
        \item fall back to a fixed, baseline proof system for $\TAUT$.
    \end{itemize}
\end{itemize}
There are three ways in which $J_i$ can fail to be a recursive jump operator. Namely, if there is some $a \in \N^+$ such that
\begin{enumerate}[F1]
    \item\label{enum:sketch-1} $J_i(a)$ does not halt.
    \item\label{enum:sketch-2} $J_i(a) = b$ and $P_a$ is a proof system for $\TAUT$ and $P_b$ is no proof system for $\TAUT$.
    \item\label{enum:sketch-3} $J_i(a) = b$ and $P_a$, $P_b$ are proof systems for $\TAUT$, but $P_a$ simulates $P_b$.
\end{enumerate}
Note that properties such as ``$P_b$ is a proof system for $\TAUT$'' are generally not robust under extensions of the oracle\footnote{Meaning, just because $P_b$ is a proof system for $\TAUT$ relative to some partial oracle $w$ does not mean that this holds relative to extensions of $w$.}. We have to account for that in the oracle construction, but for simplicity, we will ignore it in this sketch.

When diagonalizing against $J_i$, the oracle construction first attempts to realize Case F\ref{enum:sketch-1} or F\ref{enum:sketch-2} by an appropriate partial extension of the oracle.
If this is not possible, we enforce Case F\ref{enum:sketch-3} with $a \coloneqq W_i$. Since Case F\ref{enum:sketch-1} failed, $J_i(W_i) = b$ is defined. Since Case F\ref{enum:sketch-2} failed, $P_b$ is a proof system for $\TAUT$. We then add some suitable code word to the oracle that allows $W_i$ to simulate $P_b$. Then $J_i(W_i) = b$, $W_i$ and $P_b$ are proof systems for $\TAUT$ and $W_i$ simulates $P_b$. 

\subparagraph{Towards no complete disjoint sets.} 
We diagonalize against all disjoint $\NP$-pairs for being $\leqmpp$-complete. Given a pair of $\NP$-machines $(N_0, N_1)$, we define a corresponding witness disjoint pair $(A,B)$ whose elements are determined by the oracle on carefully chosen designated lengths. In particular, $0^n \in A$ if $n$ is of designated length and the oracle contains words from $00\Sigma^{n-2}$. Analogous for $B$ and $01\Sigma^{n-2}$. 

For each candidate reduction function $f \in \FP$, we ensure that $(A,B) \not \leqmpp (L(N_0),L(N_1))$ via $f$. We immediately diagonalize successfully, if we can add some word from $00\Sigma^{n-2}$ (resp., $01\Sigma^{n-2}$) to the oracle and $N_0(f(0^n))$ (resp., $N_1(f(0^n))$) rejects. If neither is possible, then both machines must accept on an exponential number of distinct oracle extensions obtained by words of length $n$. However, each leftmost accepting path queries only polynomially many oracle words. This allows us to extend the oracle by two words such that neither leftmost accepting path recognizes the second word. Consequently, $L(N_0)$ and $L(N_1)$ are not disjoint, and hence do not form a $\leqmpp$-complete disjoint $\NP$-pair.

\section{Basic Definitions and Notations.}
\subparagraph{Sets.} Let $\Sigma \coloneqq \{0,1\}$ be the default alphabet and $\Sigma^*$ be the set of finite words over $\Sigma$. The set of all (positive) natural numbers is denoted by $\N$ ($\N^+$). For $a,b \in \N$, we define $a\N+b \coloneqq \{a \cdot n + b \mid n \in \N\}$. We write the empty set as $\emptyset$. The cardinality of a set $A$ is denoted by $\card{A}$. For a set $A \subseteq \Sigma^*$ and a number $n \in \N$, we define $A^{\leq n} \coloneqq \{w \in A \mid |w| \leq n\}$ and analogous for $=$. For a clearer notation we use $\Sigma ^{\leq n}$ for ${\Sigma^*}^{\leq n}$ and $\Sigma^n$ for ${\Sigma^*}^{=n}$. The operators $\cup$, $\cap$, and $\setminus$ denote the union, intersection and set-difference. We denote the complement of a set $A \subseteq \Sigma^*$ by $\overline{A} = \{x \in \Sigma^* \mid x \notin A\}$. We may mix the definition of sets with regular expressions, e.g., the expression $0\Sigma^*$ denotes the set $\{0x \mid x \in \Sigma^*\}$ and the expression $010^*$ denotes the set $\{010^n \mid n \in \N\}$.

\subparagraph{Machines.} We use the default model of a Turing machine in both the deterministic and non-deterministic variant. The language decided by a Turing machine $M$ is denoted by $L(M)$. We use Turing transducer to compute functions. For a Turing transducer $F$ we write $F(x)=y$ when on input $x$ the transducer outputs $y$. Hence, a Turing transducer $F$ computes a function and we may denote ''the function computed by $F$'' by $F$ itself. For a Turing machine or Turing transducer $M$, we denote the number of steps the longest path of the computation $M(x)$ takes by $\runtime(M(x))$. When calling a non-deterministic Turing machine $M$ a non-deterministic Turing acceptor, then $\runtime(M(x))$ denotes the number of steps the shortest accepting path of the computation $M(x)$ takes, because Turing acceptors may endlessly loop on some computation paths. We denote the program code of a machine $M$ with respect to some machine enumeration by $\langle M \rangle$.

\subparagraph{Functions.}
The domain and range of a function $f$ are denoted by $\dom (f)$ and $\ran (f)$. We use the notion of time-constructible functions as described in \cite{ab09}. A function $f \colon \N \to \N$ is called non-polynomial if for every polynomial $p$ there is an $n \in \N$ such that $f(n) \geq p(n)$. The function computing the maximum element of a finite subset of $\N$ is denoted by $\max$. Also, let $\langle \cdot \rangle \colon \bigcup _{i \geq 0} \N^i \to (2\N+1)$ be an injective polynomial-time computable and polynomial-time invertible sequence encoding function such that $|\langle u_1, \dots , u_n \rangle | = 2(|u_1| + \cdots + |u_n| + n)+1$. 

\subparagraph*{Complexity classes.}
The definition of the basic complexity classes such as $\P$, $\FP$, $\NP$, $\coNP$, $\PH$, $\PSPACE$, $\E$, $\coNE$, $\NTIME$, $\coNTIME$ and the $\leqmp$-complete set $\TAUT$ for $\coNP$ can be found in \cite{ab09}. A set $S \subseteq \Sigma^*$ is sparse if there is a polynomial $p$ such that $\card{S^{\leq n}} \leq p(n)$ for all $n \in \N$. We denote the class of all sparse sets as $\SPARSE$. A disjoint $\NP$-pair is a pair $(A,B)$ of disjoint sets in $\NP$. Let $\DisjNP$ denote the class containing all disjoint $\NP$-pairs. 

We say that a set $A$ is polynomial-time many-one reducible to $B$, denoted by $A \leqmp B$, if there exists a function $f \in \FP$ such that $x \in A \Leftrightarrow f(x) \in B$ for all $x \in \Sigma^*$. We say that a disjoint $\NP$-pair $(A,B)$ is polynomial-time many-one reducible to $(C,D)$, denoted by $(A,B) \leqmpp (C,D)$, if there is a function $f \in \FP$ such that $f(A) \subseteq C$ and $f(B) \subseteq D$. For any class $\mathcal{C}$ and any notion of reducibility $\leq$ we say that $A$ is $\leq$-hard for $\mathcal{C}$ when $B \leq A$ for all $B \in \mathcal{C}$. If additionally $A \in \mathcal{C}$, we say that $A$ is $\leq$-complete for $\mathcal{C}$.

\subparagraph*{Proof systems.}
We use proof systems for sets defined by Cook and Reckhow \cite{cr79}. They define a function $f \in \FP$ to be a proof system for $\ran (f)$. Furthermore:
\begin{itemize}
\item A proof system $g$ is (p-)simulated by a proof system $f$, denoted by $g \leq f$ (resp., $g \psim f$), if there exists a total function $\pi$ (resp., $\pi \in \FP$) and a polynomial $p$ such that $|\pi(x)| \leq p(|x|)$ and $f(\pi(x)) = g(x)$ for all $x \in \Sigma^*$. In this context the function $\pi$ is called simulation function. Note that $g \psim f$ implies $g \leq f$.
\item A proof system $f$ is (p-)optimal for $\ran(f)$, if $g \leq f$ (resp., $g \psim f$) for all $g \in \FP$ with $\ran (g) = \ran (f)$.
\end{itemize}
In order to define jump operators, we must extend the definition of proof systems from polynomial-time computable functions to Turing transducers. There are two sensible options. First, one may consider any Turing transducer computing an $\FP$ function as a proof system, even if the transducer itself does not run in polynomial time. Alternatively, one may only consider Turing transducers running in polynomial time as proof systems. In this paper, we adopt the second option and thus identify proof systems with polynomial-time Turing transducers.

\subparagraph{Recursive jump operator.} 
A jump operator for a set $L$ is a function $J\colon \N^+ \to \N^+$ such that on input $a$, if $a$ is the code of a proof system $F$ for $L$, then $J(a)$ is the code of a proof system $G$ for $L$ such that $F$ cannot simulate $G$.\footnote{Note that the choice of which Turing transducers are regarded as a proof system is relevant here, since this determines the set of valid outputs as well as the set of inputs on which the jump operator is required to produce valid outputs.} If additionally $J$ is recursive, then $J$ is a recursive jump operator.

\section{Recursive Jump Operators for Sets Without Optimal Proof Systems}

Köbler and Messner \cite{km98} show that the class of sets with optimal proof systems is downward-closed under $\leqmp$-reducibility, with the trivial exception of $\emptyset$, which admits no proof system by definition. Our first result follows up on this by showing that the class of sets with recursive jump operators is upward-closed under $\leqmp$-reducibility. Second, we extract recursive jump operators from Messner's \cite{mes99} proof against the existence of optimal proof systems for sets in $\coNTIME(t)$ for non-polynomial and time-constructible $t$. Combining both results shows that the sets known to not have optimal proof system by Messner \cite{mes99} also have recursive jump operators. 

For the rest of this section, let $P_1, P_2, \dots$ be some standard enumeration of Turing transducers.
\begin{theorem}\label{thm:jump-operator-reduction}
   If $A \neq \emptyset$ has a recursive jump operator and $A \leqmp B$, then $B$ also has a recursive jump operator. 
\end{theorem}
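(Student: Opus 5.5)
The plan is to pull the given proof system for $B$ back to a proof system for $A$ along the reduction, apply the jump operator for $A$ there, and push the result forward to $B$. Let $J_A$ be a recursive jump operator for $A$, and let $f \in \FP$ witness $A \leqmp B$, so $x \in A \Leftrightarrow f(x) \in B$. Fix some $a_0 \in A$, which exists since $A \neq \emptyset$. All machine transformations below are computable from program codes (s-m-n), so the composite map $J_B$ is recursive.

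\emph{Step 1 (pull back).} Given a code $b$, let $G \coloneqq P_b$. Effectively build a transducer $H_b$ that, on an input of the form $\langle x, w\rangle$, computes $G(w)$ and $f(x)$. It outputs $x$ if $f(x) = G(w)$, and outputs $a_0$ otherwise (including on malformed inputs).
\begin{itemize}
\item If $G$ runs in polynomial time, so does $H_b$.
\item If $\ran(G) = B$, then $\ran(H_b) = A$. Indeed, every output $x$ satisfies $f(x) \in B$, hence $x \in A$. Conversely, every $x \in A$ is output on $\langle x, w\rangle$ for any $G$-proof $w$ of $f(x)$.
\end{itemize}
Let $h$ be the code of $H_b$.

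\emph{Step 2 (jump in $A$).} Compute $c \coloneqq J_A(h)$ and put $H' \coloneqq P_c$. When $G$ is a proof system for $B$, the machine $H_b$ is a proof system for $A$. Hence $H'$ is a polynomial-time proof system for $A$ with $H' \not\leq H_b$.

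\emph{Step 3 (push forward).} Define $J_B(b)$ as the code of the transducer $G'$ with $G'(0y) = f(H'(y))$ and $G'(1w) = G(w)$ (and, say, $G'(\varepsilon) = G(\varepsilon)$).
\begin{itemize}
\item $G'$ is polynomial-time, because $G$ and $H'$ are.
\item $\ran(G') = B$: the first branch outputs elements of $f(A) \subseteq B$, and the second branch covers all of $B$.
\end{itemize}
On codes $b$ that are not proof systems for $B$, the output is irrelevant, so nothing needs to be checked there.

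\emph{Step 4 (non-simulation).} Suppose $G$ simulates $G'$ via a total $\pi$ with $|\pi(z)| \le p(|z|)$. Define $\sigma(y) \coloneqq \langle H'(y), \pi(0y)\rangle$. Then $G(\pi(0y)) = f(H'(y))$, so $H_b(\sigma(y)) = H'(y)$. Moreover, $|\sigma(y)|$ is polynomially bounded, since $|H'(y)|$ is polynomial in $|y|$ and $|\langle\cdot\rangle|$ is linear in its components. Thus $H' \leq H_b$, which contradicts Step 2. Hence $G \not\leq G'$, and $J_B$ is a recursive jump operator for $B$.

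The main point needing care is that, under the paper's convention that proof systems are exactly the polynomial-time transducers, the constructed machines $H_b$ and $G'$ must be polynomial-time whenever their inputs are. This holds because they run $P_b$ and $P_c$ directly as subroutines without clocks, so polynomial running time is inherited. The second point needing care is totality of the simulation $\sigma$ on arbitrary $y$, which holds because $H'$ is total.
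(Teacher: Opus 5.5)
Your proof is correct and is essentially the paper's argument: you pull $P_b$ back to the proof system $\langle x,w\rangle \mapsto x$ (if $f(x)=G(w)$, else a fixed element of $A$), apply $J_A$, push the result forward as $f\circ H'$ padded with $G$, refute a simulation $\pi$ via $y \mapsto \langle H'(y),\pi(0y)\rangle$, and get recursiveness from the effective code transformations. One small notational slip: under the paper's convention ($g \le f$ means $f$ simulates $g$), your final conclusion should read $G' \not\le G$ rather than $G \not\le G'$.
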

\begin{proof}
    Let $J$ denote the recursive jump operator for $A$ and let $A \leqmp B$ via some function $f \in \FP$. Let $g$ be an arbitrary proof system for $B$ and let $\bot$ be an arbitrary element from $A$. Define
    \[g'(\langle x,w \rangle) \coloneqq 
    \begin{cases}
        x, & \text{if } g(w) = f(x)\\
        \bot, & \text{otherwise}
    \end{cases}.\]
    Observe that $g'$ is a proof system for $A$, because $x \in \ran(g')$ if and only if $f(x) \in \ran(g) = B$. Let $\langle g' \rangle$ denote the code of a Turing transducer computing $g'$. Let
    \[\langle h' \rangle \coloneqq J(\langle g' \rangle),\] 
    i.e., $h'$ is a proof system for $A$ such that $g'$ does not simulate $h'$. Finally, define 
    \[h(w) \coloneqq
    \begin{cases}
        f(h'(w')), & \text{if } w = 1w'\\
        g(w'),& \text{if } w = 0w'
    \end{cases}. 
    \] 
    Clearly $h$ is a proof system for $B$. We show that $g$ does not simulate $h$.

    Suppose for the sake of contradiction that $g$ simulates $h$ via $\pi$, i.e., $h(w) = g(\pi(w))$ for all $w \in \Sigma^*$. Then $g(\pi(1w)) = h(1w) = f(h'(w))$. Consequently, $g'(\langle h'(w), \pi(1w) \rangle) = h'(w)$, showing that any $h'$-proof has an at most polynomially longer $g'$-proof. This is a contradiction to $g'$ not simulating $h'$, so $g$ does not simulate $h$.
    
    Above construction provides a recursive jump operator $J'$ for $B$. Given any Turing transducer code $a$, $J'$ can compute the code $b$ of a Turing transducer working like $g'$ where $P_a$ is used for $g$. By assumption, $J'$ can use $J$ to compute $c \coloneqq J(b)$. Finally, $J'$ can compute the code $d$ of a Turing transducer computing $h$ where $P_a$ is used for $g$ and $P_c$ is used for $h'$. Hence, $J'$ is recursive. Also, observe that if $P_a$ is a proof system for $B$, then $J'$ can compute the codes $b$, $c$, and $d$ such that $P_b$, $P_c$, and $P_d$ are polynomial-time computable. Hence, $J'$ is recursive and if $P_a$ is a proof system for $B$, then $J'(a) = d$ where $P_d$ is a proof system for $B$ that is not simulated by $P_a$.      
\end{proof}
Above theorem can be helpful when approaching question Q1.
\begin{corollary}
    Let $L \neq \emptyset$ and $L \leqmp \TAUT$. If $L$ has recursive jump operators, then $\TAUT$ has recursive jump operators.
\end{corollary}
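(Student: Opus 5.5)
This is a direct instance of Theorem~\ref{thm:jump-operator-reduction} with $A \coloneqq L$ and $B \coloneqq \TAUT$. The plan is therefore to check the hypotheses of that theorem and then apply it.

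First, $L \neq \emptyset$ is assumed outright, which is exactly the non-emptiness condition the theorem needs. In the theorem's proof this condition is used to pick a default element $\bot \in A$, so that the auxiliary system $g'$ is a total function whose range is exactly $A$. Second, $L$ has a recursive jump operator by assumption. Third, the reduction $L \leqmp \TAUT$ is given via some $f \in \FP$. With all three hypotheses in place, Theorem~\ref{thm:jump-operator-reduction} yields a recursive jump operator for $\TAUT$.

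No step here is a genuine obstacle; the only point needing care is that the notion of proof system stays the same throughout. In this paper a proof system is a polynomial-time Turing transducer, and the theorem is stated for the same enumeration $P_1, P_2, \dots$. So the operator $J'$ built in the theorem's proof is a recursive jump operator for $\TAUT$ in the exact sense of the definition, and nothing beyond a direct invocation is required.

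It is also worth noting why the corollary is stated this way: since $\TAUT$ is $\leqmp$-complete for $\coNP$, it applies in particular to every non-empty $L \in \coNP$. Hence exhibiting a recursive jump operator for any such $L$ would settle conjecture C2, and with it C1.
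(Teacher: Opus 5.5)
Your proposal is correct and matches the paper. The paper states this corollary without a separate proof, as a direct instance of Theorem~\ref{thm:jump-operator-reduction} with $A \coloneqq L$ and $B \coloneqq \TAUT$, after the same check of hypotheses that you carry out.
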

Next, we present the two most prominent results showing that there are sets without optimal proof systems. 
\begin{theorem}[\cite{mes99, mes00}]\label{thm:mes1}
    Let $t \colon \N \to \N$ be non-polynomial and time-constructible. Then there is a set $L \in \coNTIME(t)$ for which no optimal proof system exists.
\end{theorem}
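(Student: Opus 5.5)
The plan is a diagonalization over all polynomial-time Turing transducers, with the set $L$ built in stages so that each candidate proof system $P_i$ for $L$ is beaten by some other proof system for $L$. The first simplification: it suffices to make $L$ have no \emph{p-optimal} proof system in a uniform way that also defeats simulation. Concretely, I would build $L \subseteq 0^*$-like padded sets, so that $L$ is essentially tally. For tally sets, a simulation with polynomially bounded $\pi$ collapses to short proofs on unary inputs. I would use the standard equivalence: $L$ has an optimal proof system iff there is a nondeterministic acceptor (optimal acceptor variant) whose shortest accepting path on each $x\in L$ is polynomially bounded in the running time of any other acceptor for $L$ on $x$. I would not rely on this equivalence as a black box. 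Instead I would argue directly: $f$ simulating $g$ means every $g$-proof of length $m$ yields an $f$-proof of length $\le p(m)$.

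Construction. Fix the time-constructible, non-polynomial $t$, and pick a fast-growing sequence of lengths $n_1 < n_2 < \dots$ computable in time $t$, with gaps large enough that stage $k$ can fully simulate work on earlier lengths. Define $L$ on $\{0^{n} : n \in \{n_k\}\}$ (everything else in $L$) as follows. At length $n_k$, with $k=\langle i,j\rangle$, the $\coNTIME(t)$ machine deterministically checks, within $t(n_k)$ steps, whether $P_i$ outputs $0^{n_k}$ on some input of length $\le j\cdot\log$-ish bound $q_j(n_k)$. Here the bound is chosen so that the check fits into $t(n_k)$, which is possible since $t$ is non-polynomial: for each polynomial $q_j$ there are infinitely many $n$ with $2^{q_j(\cdot)}$-type brute force affordable. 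To be safe, I would instead let the machine guess universally over proofs $w$ of length $\le \log t(n_k)$. If $P_i(w)=0^{n_k}$ is found, put $0^{n_k}\notin L$; otherwise $0^{n_k}\in L$. This is co-nondeterministic: membership means ``for all short $w$, $P_i(w)\neq 0^{n_k}$'', checkable in time about $t(n_k)$. So $L\in\coNTIME(t)$ after the usual clocking.

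Why no optimal proof system. Suppose $P_i$ is a proof system for $L$. Then whenever $P_i$ has a short proof of $0^{n_k}$, that word is excluded from $L$, contradicting $\ran(P_i)=L$. Hence every $0^{n_k}$ with $k=\langle i,j\rangle$ lies in $L$ but has no $P_i$-proof of length $\le \log t(n_k)$. On the other hand, I would build a competing proof system $g_i$ for $L$: it gives $0^{n_k}$ a proof of length $O(n_k)$ (or of length $\log n_k$ padded) for those $k$ whose index is $i$. It does so by verifying in polynomial time, using the padding gap, that the stage is of the required form. The difficulty is that verifying ``no short $P_i$-proof'' takes time $t(n_k)$. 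I would handle this by making the $g_i$-proof of $0^{n_k}$ have length $t(n_k)$, so the check is polynomial in the proof length, while $P_i$ needs proofs longer than $\log t(n_k)$... That exponential-versus-log gap is too weak. So I would sharpen the diagonalization: exclude $0^{n_k}$ if $P_i$ has a proof of length $\le q_j(\,s_k)$, where $s_k$ is the $g_i$-proof length. Choose $s_k$ and $n_k$ jointly so that brute-force search over proofs of length $q_j(s_k)$ fits within $t(n_k)$. Non-polynomiality of $t$ gives infinitely many usable lengths. Then any simulation of $g_i$ by $P_i$ with polynomial $q_j$ fails at stage $\langle i,j\rangle$.

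Main obstacle. The hardest step is balancing three quantities: the search cost in $t(n_k)$, the $g_i$-proof length $s_k$, and the polynomial $g_i$-verification, so that $g_i\in\FP$ yet $P_i$ is forced to need super-polynomially longer proofs. I expect this to need $t$'s time-constructibility to compute the stage lengths. It also needs the fact that $t$ exceeds every polynomial only infinitely often, not almost everywhere, which forces choosing $n_k$ adaptively by running the clock. I would first try taking $s_k$ as the brute-force computation transcript itself. Then $g_i$ checks a transcript in polynomial time, and the search bound becomes $2^{q_j(\log s_k)}$-style. I would then adjust the exponents until the inequalities close.
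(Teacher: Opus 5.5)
\textbf{There is a genuine gap, at the competing proof system $g_i$.} Your diagonal set has the right basic shape: if $P_i$ is a proof system for $L$, every diagonal word assigned to $P_i$ must lie in $L$ and has no short $P_i$-proof, as you argue. The difficulty is $g_i$, which you yourself flag as the main obstacle.

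You assume $g_i$ must \emph{certify} membership of the diagonal words, i.e.\ certify that $P_i$ has no short proof of them, and you try to do this with computation transcripts. That inequality can never close. A brute-force transcript of length $s$ rules out $P_i$-proofs only up to length about $\log s$. Defeating a simulation with bound $q_j$ requires ruling out all $P_i$-proofs of length $q_j(s)\ge s$. No choice of exponents or of $n_k$ fixes this.

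The missing idea is that $g_i$ only needs to be a proof system for $L$ \emph{under the hypothesis} that $P_i$ is one, which is all you need to show $P_i$ is not optimal. Under that hypothesis, the diagonal words of $P_i$ are in $L$ for free, by the argument you already gave. So $g_i$ verifies nothing: set $g_i(1w)=P_i(w)$ and $g_i(0x)=x$ for every diagonal word $x$ of $P_i$, which gives proofs of length $|x|+1$. This is exactly what the paper does in the proof of Theorem~\ref{thm:rec-jump-op-exist}. There, $P_b(0x)=D[i](x)=x$ is correct only because of (\ref{align:010-subset}), which holds precisely because $P_a$ is assumed to be a proof system for $L_t$.

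\textbf{Your placement of the diagonal words also needs repair.} For $g_i\in\FP$, the diagonal words of $P_i$ must be polynomial-time recognizable. Stage lengths $n_k$ chosen adaptively by running the clock of $t$ need not be. Also, with one length per pair $\langle i,j\rangle$, you need $t$ to be large exactly at those lengths. Non-polynomiality only guarantees \emph{some} lengths where $t$ beats a given polynomial, and these need not be your stage lengths.

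The set $L_t$ of Definition~\ref{def:Lt} avoids both problems, and the index $j$ altogether. Candidate $i$ owns the whole polynomial-time recognizable slice $0^i10^*$. A word $x$ in the slice is excluded iff the universal acceptor $U$ finds a proof of $x$ for that candidate in fewer than $t(|x|)$ steps. So the search bound is $t$ itself rather than a polynomial $q_j$, and membership is clearly in $\coNTIME(t)$. Since the slice contains a word of every length $n\ge i+1$, a simulation of $g_i$ by $P_i$ with output bound $q$ would give $t(n)\le c_i p(q(n+1))+q(n+1)+c_i+d_i$ for all $n$. This contradicts that $t$ is non-polynomial.
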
 
\begin{corollary}[\cite{mes99, mes00}]\label{cor:mes1}
   No set $\leqmp$-hard for $\coNE$ has an optimal proof system. 
\end{corollary}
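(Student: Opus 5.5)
The plan is to combine Theorem~\ref{thm:mes1} with the downward closure of the class of sets with optimal proof systems under $\leqmp$-reducibility, due to Köbler and Messner~\cite{km98}, which was recalled at the beginning of this section. First, fix a concrete non-polynomial, time-constructible function whose co-nondeterministic time class lies inside $\coNE$. The simplest choice is $t(n) = 2^n$. It is time-constructible, and it is non-polynomial because it eventually dominates every polynomial. Moreover $\coNTIME(2^n) \subseteq \coNE$. Theorem~\ref{thm:mes1} then yields a set $L \in \coNE$ that has no optimal proof system.

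Second, I make sure that $L \neq \emptyset$, since the closure result excludes the empty set. If Messner's set happened to be empty, I replace it by $L \cup \{z\}$ for a fixed word $z$. This set is still in $\coNTIME(2^n)$. It also still has no optimal proof system, because optimal proof systems transfer across finite variations: a single proof can be hard-wired for $z$, and outputs equal to $z$ can be removed by mapping them to some fixed element. So from now on $L$ is nonempty, lies in $\coNE$, and has no optimal proof system.

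Now let $B$ be $\leqmp$-hard for $\coNE$. Then $L \leqmp B$ via some $f \in \FP$, and in particular $B \neq \emptyset$. Suppose, for contradiction, that $B$ has an optimal proof system $h$. I reproduce the Köbler--Messner argument, which mirrors the proof of Theorem~\ref{thm:jump-operator-reduction}. Fix $\bot \in L$ and define
\[
h'(\langle x,w\rangle) \coloneqq
\begin{cases}
x, & \text{if } h(w) = f(x),\\
\bot, & \text{otherwise}.
\end{cases}
\]
This is a proof system for $L$, since $x \in L$ if and only if $f(x) \in B = \ran(h)$.

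To see that $h'$ is optimal for $L$, take any proof system $g$ for $L$. Then $f \circ g$ is a proof system for $B$. By optimality of $h$ there is a polynomially bounded $\pi$ with $h(\pi(w)) = f(g(w))$ for all $w$. Hence $h'(\langle g(w), \pi(w)\rangle) = g(w)$. The map $w \mapsto \langle g(w), \pi(w)\rangle$ is polynomially length-bounded, because $|g(w)|$ is polynomial in $|w|$. So $g \leq h'$, and $h'$ is an optimal proof system for $L$, contradicting the choice of $L$. Therefore $B$ has no optimal proof system.

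The only step needing any care is the choice of $t$ together with the nonemptiness technicality. Once those are settled, the rest is routine.
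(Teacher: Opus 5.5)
Your overall route (Theorem~\ref{thm:mes1} with $t(n)=2^n$, then the Köbler--Messner downward closure) is the one the paper indicates, but your step securing $L \neq \emptyset$ fails. If Messner's set were empty, then $L \cup \{z\} = \{z\}$, and $\{z\}$ \emph{does} have an optimal proof system: the constant map $w \mapsto z$ simulates every proof system for $\{z\}$ via $\pi(w) = \varepsilon$. Your finite-variation transfer redirects outputs equal to $z$ to a fixed element of $L$, and such an element exists only when $L \neq \emptyset$. So the patch is vacuous: it is not needed when $L \neq \emptyset$, and it fails when $L = \emptyset$. Note also that the literal statement of Theorem~\ref{thm:mes1} is trivially witnessed by $\emptyset$, which has no proof system at all. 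Hence nonemptiness cannot be extracted from the statement; it has to come from Messner's construction. For the set $L_t$ of Definition~\ref{def:Lt} this is easy. Take an index $i$ such that $N_i$ never accepts. Then $U$ never accepts any $0^i10^j$, so $0^i10^* \subseteq L_{i,t} \subseteq L_t$. Replace your patch with this observation.

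There is also a smaller slip in the closure argument. The function $f \circ g$ has range $f(L)$, which may be a proper subset of $B$, so it is not in general a proof system for $B$. Optimality of $h$ only guarantees simulation of functions with range exactly $B$. Glue in $h$ itself, as in the proof of Theorem~\ref{thm:jump-operator-reduction}: set $g'(0w) \coloneqq f(g(w))$ and $g'(1w) \coloneqq h(w)$. Then $g'$ has range $B$, so some polynomially length-bounded $\pi$ satisfies $h(\pi(0w)) = f(g(w))$. It follows that $h'(\langle g(w), \pi(0w)\rangle) = g(w)$, which gives $g \leq h'$. With these two repairs the argument is complete.
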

We prove that the sets from Theorem \ref{thm:mes1} and Corollary \ref{cor:mes1} without optimal proof systems admit recursive jump operators. For this, we first define the sets considered in the proof of Theorem \ref{thm:mes1} and follow with our result.
\begin{definition}\label{def:Lt}
    Let $t \colon \N \to \N$ be non-polynomial and time-constructible. Let $N_1, N_2, \dots$ be a standard enumeration of non-deterministic Turing acceptors, and let $U$ be a universal non-deterministic acceptor that on input $0^i1x$ simulates $N_i(0^i1x)$ such that $\runtime(U(0^i1x)) \leq c_i \runtime(N_i(0^i1x)) + c_i$ for some constant $c_i$. For any $i \in \N^+$ define 
    \[L_{i,t} \coloneqq \{x \in 0^i10^* \mid U(x) \text{ does not accept in less than } t(|x|) \text{ steps}\}\]
    Define $L_t \coloneqq \bigcup _{i \in \N^+} L_{i,t}$.
\end{definition}
Note that by definition a time-constructible function $t$ must satisfy $t(n) \geq n$.
\begin{observation}\label{obs:Lt-in-coNTIME}
    It holds that $L_t \in \coNTIME(t)$.
\end{observation}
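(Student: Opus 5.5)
To show $L_t \in \coNTIME(t)$, I will decide the complement $\overline{L_t}$ by a nondeterministic machine $M$ running in time $O(t(n))$. Then $L_t \in \coNTIME(t)$ follows directly from the definition of the class, taking the usual convention that constant factors are absorbed (linear speed-up).

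On input $x$ with $n = |x|$, the machine $M$ works in three phases.
\begin{enumerate}
\item It checks deterministically in linear time whether $x \in 0^i 1 0^*$ for some $i \in \N^+$. If not, then $x \notin L_{j,t}$ for every $j$, so $x \in \overline{L_t}$ and $M$ accepts.
\item Otherwise $x = 0^i10^k$ for a unique $i$, and $x \in L_t$ holds if and only if $x \in L_{i,t}$. Since $t$ is time-constructible, $M$ computes $t(n)$ in binary within $O(t(n))$ steps and sets up a step counter.
\item $M$ simulates $U(x)$ nondeterministically for fewer than $t(n)$ steps of $U$. It accepts exactly when the simulated path reaches an accepting state within that budget; on reaching the budget it rejects.
\end{enumerate}

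For correctness in the well-formed case: $M$ has an accepting path if and only if $U(x)$ has an accepting path of length less than $t(n)$. By definition of $L_{i,t}$, this holds if and only if $x \notin L_{i,t}$, that is, $x \in \overline{L_t}$. Every path of $M$ halts after $O(t(n))$ steps, since the parsing takes $O(n) \le O(t(n))$ (using $t(n) \ge n$), the clock takes $O(t(n))$, and the simulation takes $O(t(n))$.

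The main obstacle is Phase 3: simulating $U$ with a clock at only constant-factor overhead. I would handle it by taking $M$ to be a multitape machine that runs $U$'s transition function directly on its own tapes, rather than simulating $U$ universally. Alongside this it decrements a binary counter, which costs amortized $O(1)$ per step. Any remaining constant factor is removed by linear speed-up. If the paper's convention for $\coNTIME(t)$ instead means $\coNTIME(O(t))$, this step is immediate.
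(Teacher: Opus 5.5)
Your proof is correct and is the routine argument the paper leaves implicit; it states this as an observation without proof. You decide $\overline{L_t}$ by accepting malformed inputs (including $10^*$) outright and otherwise guessing an accepting path of the fixed machine $U$, clocked via the time-constructibility of $t$, which takes $O(t(n))$ time and suffices because the paper's Arora--Barak convention already absorbs constant factors into $\coNTIME(t)$.
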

\begin{theorem}\label{thm:rec-jump-op-exist}
    Let $t \colon \N \to \N$ be non-polynomial and time-constructible. Then $L_t$ has a recursive jump operator.
\end{theorem}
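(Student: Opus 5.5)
Given a code $a$ of a polynomial-time transducer $F = P_a$ with $\ran(F) = L_t$, I will effectively construct a nondeterministic acceptor $N_i$ whose index $i$ is computable from $a$. It guesses a short $F$-proof for $x$, so that $L_{i,t}$ becomes a ``hard'' slice. The jump system then proves that slice trivially.

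\textbf{Constructing $N_i$.} From $a$ I compute, via the recursion/fixed-point theorem, an index $i$ such that $N_i$ behaves as follows on input $x$:
\begin{itemize}
\item[] if $x = 0^i10^m$, guess $w$ with $|w| \le |x|$, compute $P_a(w)$ within a clock of $|x|^{\,k}$ steps, and accept iff $P_a(w) = x$.
\end{itemize}
Here the clock exponent $k$ is taken as $|a|$ or grows slowly with $|x|$. Ideally I would use a nondeterministic acceptor that, on input $0^i1 0^m$, tries all polynomial bounds diagonally. Concretely, I would let $N_i$ guess $w$ with $|w| \le \log$-many\dots{} but it is simpler to let it guess $j \le m$ and $w$ with $|w| \le m^j$, then run $P_a$ for $m^j$ steps. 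The running time is then superpolynomial but small compared to $t$ on a suitable subsequence.

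I prefer to avoid the fixed point and exploit the clock directly. Let $N_{i}$ on $0^i10^m$ guess $w$ of length at most $m$, simulate $P_a(w)$ for $m^{|a|}$ steps, and accept if the output equals $x$. Such an $i = i(a)$ is computable from $a$, and the fixed point is needed only because $N_i$'s input prefix depends on $i$.

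\textbf{Key claim.} $L_{i,t}$ is exactly the set of $x \in 0^i10^*$ that have no $F$-proof of length at most $|x|$. Indeed, if $x$ has such a proof, $N_i$ accepts within $\mathrm{poly}(|x|) < t(|x|)$ steps for large $|x|$, so $x \notin L_t$. Conversely, if $x \notin L_{i,t}$ then $x \notin L_t$, hence $x \notin \ran(F)$ and $N_i$ cannot accept. Small $x$ are handled finitely.

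So every $x \in L_{i,t}$ has only $F$-proofs longer than $|x|$. If $L_{i,t}$ were finite, then $U$ would accept all long $0^i10^m$ in polynomial time. I will argue that this contradicts $t$ being non-polynomial, since the accepting paths would be $F$-proofs of non-members. Hence $L_{i,t}$ is infinite. This step is messy, and it is why a super-polynomial but sub-$t$ bound on proof length is needed instead of $|x|$. I would let $N_i$ accept only via proofs of length at most $|x|^{g}$, where $g$ grows slowly, and rely on the non-polynomiality of $t$ to ensure infinitely many members.

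\textbf{The jump and its verification.} Define $J(a)$ as the code of
\[
G(0w) = P_a(w), \qquad G(1\,0^m) = 0^i10^m \text{ if } U(0^i10^m) \text{ rejects within } t(m+i+1) \text{ steps, else } \bot,
\]
where deciding within the $t$-bound is made polynomial by clocking. Making this test polynomial in the proof length is the main obstacle, since $U$'s rejection is a coNTIME condition. I will fix it by the standard device of padding: the proof is $1\,0^{m}$ followed by $1^{2^{t(\cdot)}}$. Alternatively, and more cleanly, $G$ can output $0^i10^m$ only when it can deterministically certify membership. Here the key claim helps: $x \in L_{i,t}$ iff there is no short $F$-proof, and $G$ can check this by brute force over $F$-proofs given padding of length about $2^{|x|^{g}}$, which is still much shorter than the proofs $F$ needs? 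This comparison is exactly the delicate point. Once $G$ is a proof system for $L_t$, the fact that $F$ cannot simulate $G$ follows because $G$ proves infinitely many elements of $L_{i,t}$ with proofs of length polynomially related to $|x|^{g}$, whereas $F$ requires longer ones. Every step is computable uniformly in $a$, so $J$ is recursive.
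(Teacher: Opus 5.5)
There is a genuine gap: your argument establishes neither of the two things the jump needs. You leave open whether $L_{i,t}$ is infinite (``I will argue\dots''). Your system $G$ also has no polynomial-time way to certify $0^i10^m\in L_t$. The padding you propose (length $2^{t(\cdot)}$ or $2^{|x|^g}$) makes $G$-proofs superpolynomially long in $|x|$. Since simulation is measured against $G$-proof length, nothing then prevents $P_a$ from simulating $G$. Clocking $N_i$ (witnesses of length $\le m$, time $m^{|a|}$) also destroys the lower-bound mechanism: at best it shows some slice elements have no $P_a$-proof of length $\le |x|$, which is not a superpolynomial bound. Finally, non-polynomiality only gives $t(n)\ge p(n)$ infinitely often, not ``for large $|x|$''.

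The missing idea is hidden in your ``conversely'' step, which you misread. For $x\in 0^i10^*$, $x\notin L_{i,t}$ \emph{means} that $U(x)$, hence $N_i(x)$, accepts, so $x\in\ran(P_a)=L_t$. But $x\in 0^i10^*\setminus L_{i,t}$ also forces $x\notin L_t$. This is a contradiction, not a converse: whenever $P_a$ is a proof system for $L_t$, the \emph{entire} slice $0^i10^*$ lies in $L_t$.

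Hence $G$ needs no membership test. Set $G(1w)=P_a(w)$ and $G(0w)=D(w)$ for a polytime $D$ with range exactly $0^i10^*$ and $D(x)=x$ on the slice. If $P_a$ is not a proof system, the output of $J$ is irrelevant.

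For the lower bound, drop the clock and let $N_i$ guess a $P_a$-proof of arbitrary length. No fixed point is needed, since $U$ handles the prefix $0^i1$ and the machine need not know $i$. Every slice element lies in $L_{i,t}$, so $c_i\,\runtime(N_i(x))+c_i\ge t(|x|)$. That is, $c_i p(|\hat x|)+|\hat x|+c_i\ge t(|x|)$, where $\hat x$ is the shortest $P_a$-proof of $x$ and $p$ bounds the runtime of $P_a$. If $P_a$ simulated $G$ via a $q$-bounded function, then $|\hat x|\le q(|x|+1)$ on the slice. The slice has an element of every length $n>i$, so $t(n)\le c_i p(q(n+1))+q(n+1)+c_i+d_i$ for all $n$, where $d_i=\max\{t(n)\mid n\le i\}$. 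This contradicts the non-polynomiality of $t$, and it is exactly the paper's proof.
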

\begin{proof}
    For $i \in \N^+$ let $D[i]$ denote a polynomial time Turing transducer with range $0^i10^*$ and $D[i](0^i10^j) = 0^i10^j$ for all $j \in \N$. For $a \in \N^+$ let $N[a]$ denote a non-deterministic Turing acceptor that on input $y$ computes $P_a(x) = y'$ for all $x$ and accepts on paths where $y=y'$. Then $N[a]$ accepts $L_t$ if and only if $\ran(P_a) = L_t$. It is clear that given $a \in \N^+$ one can compute $i \in \N^+$ such that $N_i = N[a]$. 

    Algorithm \ref{alg:3} defines how the jump operator $J$ for $L_t$ operates.
    \begin{algorithm}
    \caption{Jump operator $J$ for $L_t$}\label{alg:3}
    \begin{algorithmic}[1]
    \State \textbf{Input:} $a \in \N^+$\label{alg3:line-1}
    \State Compute code $i$ such that $N_i = N[a]$\label{alg3:line-2}
    \State Compute code $b$ such that $P_b(x) = \begin{cases}
        P_a(x') & \text{if }x=1x'\\
        D[i](x') & \text{if }x = 0x'
    \end{cases}$\label{alg3:line-3}
    \State \Return $b$\label{alg3:line-4}
    \end{algorithmic}
    \end{algorithm}

    It is clear that $J$ is recursive and that $b$ can be computed such that $P_b$'s runtime is in the order of magnitude of $P_a$ and $D[i]$. We show that $J$ is also a jump operator for $L_t$. Let $a \in \N^+$ such that $P_a$ is a proof system for $L_t$. Consequently, $N[a] = N_i$ accepts $L_t$. Then it holds that 
    \begin{align}\label{align:010-subset}
        0^i10^* \subseteq L_t,
    \end{align}
    because if $0^i10^j \notin L_t$ for some $j \in \N$, then $U(0^i10^j)$ accepts meaning that also $N_i(0^i10^j)$ accepts, a contradiction to $N_i$ accepting $L_t$. Furthermore, 
    \begin{align}\label{align:U-runtime}
        c_i \runtime(N_i(x)) + c_i \geq \runtime(U(x)) \geq t(|x|) \text{ for }x \in 0^i10^*,    
    \end{align}
    because if $U(x)$ accepts in less than $t(|x|)$ steps, then $x \notin L_t$, contradicting $0^i10^* \subseteq L_t$. 

    By (\ref{align:010-subset}), by $\ran(D[i]) = 0^i10^*$, and by $P_a$ being a proof system for $L_t$, it follows that $P_b$ is also a proof system for $L_t$. Furthermore, for $x \in 0^i10^*$, $P_b(0x) = D[i](x) = x$, so any $x \in 0^i10^*$ has a $P_b$-proof of length $|x|+1$. We show that the proof system $P_a$ can not simulate the proof system $P_b$, because of $P_b$'s short proofs for $0^i10^*$.
    
    Assume for the sake of contradiction that $P_a$ simulates $P_b$ via the function $\pi$ whose output-length is bounded by the polynomial $q$. Let $p$ be the polynomial bounding the runtime of $P_a$. For any $x \in L_t$, let $\hat{x}$ be the lexicographically shortest $P_a$-proof for $x$. Since $N_i(x)$ takes $|\hat{x}|$ steps to guess the shortest $P_a$-proof for $x$ and at most $p(|\hat{x}|)$ steps to compute $P_a(\hat{x})$, from (\ref{align:U-runtime}) we get
    \[c_i p(|\hat{x}|) + |\hat{x}| + c_i \geq c_i \runtime(N_i(x)) + c_i \geq t(|x|) \text{ for }x \in 0^i10^*.\]
    Since $P_a$ simulates $P_b$ via $\pi$, for all $x \in 0^i10^*$ it also holds that
    \[|\hat{x}| \leq q(|x|+1).\]
    Let $d_i \coloneqq max\{t(n) \mid n \leq i\}$. Then for all $n \in \N$ it holds that
    \[c_i p(q(n+1)) + q(n+1) + c_i + d_i \geq t(n).\]
    Note that $c_i p(q(n+1)) + q(n+1) + c_i + d_i$ is polynomial in $n$. This contradicts that $t$ is non-polynomial. So the assumption is false and $P_a$ does not simulate $P_b$.

    In total, $J$ is recursive and when $a$ is the code a of a proof system $P_a$ for $L_t$, then $b = J(a)$ is the code of a proof system $P_b$ for $L_t$ and $P_a$ does not simulate $P_b$. Hence, $J$ is a recursive jump operator for $L_t$.
\end{proof}
\begin{corollary}
    All sets $\leqmp$-hard for $\coNE$ have recursive jump operators.
\end{corollary}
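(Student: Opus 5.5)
Our plan is to combine Theorem~\ref{thm:rec-jump-op-exist} with Theorem~\ref{thm:jump-operator-reduction}. Let $B$ be any set that is $\leqmp$-hard for $\coNE$. It suffices to find a non-polynomial, time-constructible function $t$ with $L_t \in \coNE$ and $L_t \neq \emptyset$. Then $L_t \leqmp B$ by hardness, $L_t$ has a recursive jump operator by Theorem~\ref{thm:rec-jump-op-exist}, and Theorem~\ref{thm:jump-operator-reduction} transfers this to $B$.

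For the choice of $t$ we take $t(n) \coloneqq 2^n$. This function is time-constructible, and it is non-polynomial since $2^n \geq p(n)$ for large $n$ for every polynomial $p$. By Observation~\ref{obs:Lt-in-coNTIME}, $L_t \in \coNTIME(2^n) \subseteq \coNE$. Hence $L_t \leqmp B$.

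The step needing care is $L_t \neq \emptyset$, which is required by Theorem~\ref{thm:jump-operator-reduction}. We would pick an index $i$ such that $N_i$ never accepts, for instance a machine that immediately rejects on every input. Then $U(0^i1)$ simulates $N_i(0^i1)$ and does not accept, so in particular it does not accept in fewer than $t(|0^i1|)$ steps. Thus $0^i1 \in L_{i,t} \subseteq L_t$. Whatever convention the enumeration $N_1, N_2, \dots$ uses, it contains such a machine, so this step is immediate.

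Everything else is a direct invocation of the two cited theorems, so we expect no real obstacle beyond checking that $t(n) = 2^n$ satisfies the hypotheses of Theorem~\ref{thm:rec-jump-op-exist}.
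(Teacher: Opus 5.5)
Your proposal is correct and follows the paper's proof: take $t(n)=2^n$, use Theorem~\ref{thm:rec-jump-op-exist} and Observation~\ref{obs:Lt-in-coNTIME} to get $L_t \in \coNE$ with a recursive jump operator, and transfer this to any $\leqmp$-hard set via Theorem~\ref{thm:jump-operator-reduction}. Your explicit check that $L_t \neq \emptyset$, using an index $i$ whose $N_i$ never accepts, supplies a hypothesis of Theorem~\ref{thm:jump-operator-reduction} that the paper's proof leaves implicit.
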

\begin{proof}
    Let $t(n) \coloneqq 2^n$. By Theorem \ref{thm:rec-jump-op-exist} and Observation \ref{obs:Lt-in-coNTIME}, there is $L_t \in \coNTIME(t) \subseteq \coNE$ that has a recursive jump operator. Let $L$ be an arbitrary $\leqmp$-hard set for $\coNE$. Then $L_t \leqmp L$ and thus Theorem \ref{thm:jump-operator-reduction} gives that $L$ also has recursive jump operators.
\end{proof}

\section{Oracle Against a Positive Answer to Q1}
In this section, we construct a sparse oracle relative to which $\DisjNP$ has no $\leqmpp$-complete sets and $\TAUT$ has no recursive jump operator. By a result of Razborov \cite{raz94}, there are also no optimal proof systems for $\TAUT$ relative to the sparse oracle. Using the approach of Egidy and Glaßer \cite{eg25}, we can combine this sparse oracle with an oracle relative to which the polynomial-time hierarchy is infinite and obtain an oracle that has the combined properties. First, we introduce some further definitions and oracle-specific notations and follow with the rigorous construction of the oracle.

\subsection{Notation for the Oracle Construction}

We relativize the concept of Turing machines and Turing transducers by giving them access to a write-only oracle tape. We say that two oracle transducers $F$ and $G$ are equivalent, denoted by $F \equiv G$, if they compute the same function. We relativize machines, complexity classes, proof systems, jump operators, and (p-)simulation by defining them over machines with oracle access, i.e., whenever a Turing machine or Turing transducer is part of a definition, we replace them by an oracle Turing machine or an oracle Turing transducer. 
We indicate the access to some oracle $O$ in the superscript of the mentioned concepts, i.e., $\mathcal{C}^O$ for a complexity class $\mathcal{C}$ and $M^O$ for a Turing machine or Turing transducer $M$. We sometimes omit the oracles in the superscripts, e.g., when sketching ideas in order to convey intuition, but never in actual proofs. We also transfer all notations to the respective oracle concepts. Note that the $\coNP$-complete set $\TAUT$ can be generalized such that for every oracle $O$, $\TAUT^O$ is $\coNP^O$-complete. Fix some strictly monotone increasing polynomial $p_t$ such that $\overline{\TAUT^O}$ is decidable in non-deterministic time $p_t$ relative to all oracles $O$.

\subparagraph*{Words and sets.}
We denote the length of a word $w \in \Sigma^*$ by $|w|$. The empty word has length $0$ and is denoted by $\varepsilon$. The $(i+1)$-th letter of a word $w$ for $0 \leq i < |w|$ is denoted by $w(i)$, i.e., $w=w(0)w(1)\cdots w(|w|-1)$. If $v$ is a (strict) prefix of $w$, we write $v \sqsubseteq w$ ($v \sqsubsetneq w$) or $w \sqsupseteq v$ ($w \sqsupsetneq v$). 

We identify $\Sigma^*$ with $\N$ through a polynomial-time computable and polynomial-time invertible bijection $\mathrm{enc} \colon \Sigma^* \to \N$ defined as $\mathrm{enc}(w) = \sum _{i<|w|}(1+w(i))2^{i}$. Thus, we can treat words from $\Sigma^*$ as numbers from $\N$ and vice versa, which allows us to use notations, relations and operations of words for numbers and vice versa (e.g., we can define the length of a number by this). In particular, $\ran(\langle \cdot \rangle) \subseteq 0\Sigma^*$. Expressions like $0^i$ and $1^i$ are ambiguous, because they can be interpreted as words from $\Sigma ^*$ or numbers. We resolve this ambiguity by the context or explicitly naming the interpretation. A word $w \in \Sigma ^*$ can be interpreted as a set $\{i \in \N \mid w(i) = 1\}$
and subsequently as a partial oracle, which is defined for all words up to $|w|-1$. So $|w|$ is the first word that $w$ is not defined for, i.e., $|w| \in w1$ and $|w| \notin w0$. During the oracle construction we often use words from $\Sigma ^*$ to denote partially defined oracles. In particular, oracle queries for undefined words of a partial oracle are answered negatively.

For a deterministic Turing machine or transducer $F$ and an oracle $O$, we define $Q(F^O(x))$ as the set of words queried to the oracle by the computation $F^O(x)$. For a non-deterministic Turing machine $N$, we define $Q(N^O(x))$ as the set of words queried on the leftmost accepting computation path of $N^O(x)$. If $N^O(x)$ rejects, then $Q(N^O(x))$ is the empty set. Note that in the context of $Q$ we view a chained computation like $N^O(F^O(x))$ as one computation that first computes $F^O(x) = y$ and then $N^O(y)$.
\subparagraph*{Functions.}
A function $f'$ is an extension of a function $f$, denoted by $f \sqsubseteq f'$ or $f' \sqsupseteq f$, if $\dom(f) \subseteq \dom(f')$ and $f(x) = f'(x)$ for all $x \in \dom (f)$. If $x \notin \dom (f)$, then $f \cup \{x \mapsto y\}$ denotes the extension $f'$ of $f$ such that $f'(z) = f(z)$ for $z \not = x$ and $f'(x)=y$. We define polynomial functions $p_i \colon \N \to \N$ for $i \in \N^+$ by $p_i(n) \coloneqq n^i + i$.

\subparagraph{Definite computations.}
A computation $F^w(x)$ is called definite, if $w$ is defined for all words of length less than or equal to the runtime bound of $F^w(x)$.
A computation definitely accepts (resp., rejects), if it is definite and accepts (resp., rejects). We may combine computations to more complex expressions and call those definite, if every individual computation is definite, e.g., ``$N^w(F^w(x))$ is definite'' means $F^w(x)$ outputs some $y$ and is definite, and $ N^w(y)$ is definite.

\subparagraph{Enumerations.} For any oracle $O$ let $\{F_i^O\}_{i \in \N^+}$ be a standard enumeration of polynomial time oracle Turing transducers, where $F_i^O$ has running time exactly $p_i$. For any oracle $O$ let $\{N_i^O\}_{i \in \N^+}$ be a standard enumeration of non-deterministic polynomial time oracle Turing machines, where $N_i^O$ has running time exactly $p_i$. For any oracle $O$, let $\{J_i^O\}_{i \in \N^+}$ be a standard enumeration of oracle Turing transducers and let $\{P_i^O\}_{i \in \N^+}$ be a different notation for the same enumeration. For readability, we will use the machines $J_i$ when referring to candidate jump operators and the machines $P_i$ when referring to candidate proof systems, although $J_i$ is the same machine as $P_i$.

\subsection{Oracle Construction}
For this section, let $A \subseteq 1\Sigma^*$ (equivalently $A \subseteq 2\N$) be arbitrary. We will treat $A$ as some base oracle already present in the beginning of the construction. We will construct a sparse oracle $O \subseteq 0\Sigma^*$ (equivalently $O \subseteq 2\N+1$), such that the desired properties hold relative to $O \cup A$. 

\subparagraph{Witness pairs.}
For each candidate disjoint $\NP$-pair $(C,D)$ we define a witness pair $(A_m,B_m) \in \DisjNP$ showing that $(A_m,B_m) \not \leqmpp (C,D)$, i.e., $(A_m,B_m)$ is a witness that $(C,D)$ is no $\leqmpp$-complete pair for $\DisjNP$.

\begin{definition}[Stages]\label{def:Stages}
Define $e \colon \N \to \N$ such that $e(0) \coloneqq 2$ and $e(n) \coloneqq 2^{e(n-1)}$ for $n \in \N^+$.  For $m \in \N$ define $H_m \coloneqq \bigcup_{i\in \N}e(\langle m,i \rangle)$. Define $\mathcal{H} = \bigcup _{m \in \N} H_{m}$.
\end{definition}

\begin{observation}[Properties of $H_m$]\label{obs:Hi}
    $\mathcal{H} \in \P$ and $H_{m} \in \P$ for all $m \in \N$.
\end{observation}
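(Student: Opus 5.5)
To show $H_m \in \P$ and $\mathcal{H} \in \P$, I will give a decision procedure whose running time is polynomial in the binary length of the input $n$. It rests on the fact that $e$ grows as a tower of exponentials, so that $e(k)$ can only be computed while its value stays small compared with $n$.

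First consider $\mathcal{H}$. On input $n$ (of binary length roughly $\log n$), compute $e(0)=2, e(1)=4, e(2)=16, \dots$ in turn. Stop as soon as some value equals $n$ and accept, or exceeds $n$ and reject. Each computed value except the last is at most $n$, so its binary length is at most $|n|$. The last value $e(k)=2^{e(k-1)}$ with $e(k-1)\le n$ need not be written out in full. Instead, compare $e(k-1)$ with the binary length of $n$: we have $2^{e(k-1)}=n$ exactly when $n$ is a power of two with exponent $e(k-1)$. Since $e$ is strictly increasing and $e(k)\ge 2^k$, the number of iterations is at most $\log\log n + O(1)$. Every step is polynomial in $|n|$, so we obtain, in polynomial time, either a rejection or the unique $k$ with $e(k)=n$. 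This shows $\mathcal{H}\in\P$, noting that $\mathcal{H}=\ran(e)$ restricted to arguments in $\ran(\langle\cdot\rangle)$, as discussed next.

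For $H_m$, with $m$ fixed, and for the refinement of $\mathcal{H}$, take the $k$ found above and decide whether $k=\langle m,i\rangle$ for some $i$. This decoding step is where polynomial-time invertibility of the sequence encoding $\langle\cdot\rangle$ is needed. Since $k\le \log\log n+O(1)$, we have $|k| = O(\log|n|)$, so invertibility gives the answer in time polynomial in $|k|$, which is small compared with $|n|$. Accept $n\in H_m$ if and only if $k$ decodes to a pair whose first component is $m$. For $\mathcal{H}$ we likewise accept exactly when $k$ decodes to some pair $\langle m',i\rangle$; this is a polynomial-time check that $k\in\ran(\langle\cdot\rangle)$ and has arity $2$.

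The only delicate point is avoiding computation of $e(k)$ when it is astronomically larger than $n$. The comparison via the exponent (binary length) handles this, and I expect no real obstacle beyond stating it carefully.
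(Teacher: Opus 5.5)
Your argument is correct and is the routine verification the paper leaves implicit (it states this as an observation without proof). You climb the tower $e(0),e(1),\dots$ without ever writing out $2^{e(k-1)}$ in full, comparing $e(k-1)$ with the bit length of $n$ instead, and then use the polynomial-time invertibility of $\langle\cdot\rangle$ to decide whether the unique $k$ with $e(k)=n$ is a pair $\langle m,i\rangle$ (with the prescribed first component for $H_m$). One small slip: $e(k)\ge 2^k$ only yields $k\le\log n$, not $k\le\log\log n+O(1)$ (for the latter use $e(k)\ge 2^{2^k}$, which holds by induction), but the weaker bound already gives polynomially many iterations and $|k|=O(\log|n|)$, so the argument is unaffected.
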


\begin{definition}[Witness pair]\label{def:witness-sets}
For an oracle $O$ and $m \in \N$ define
\begin{align*}
A_{m}^O &\coloneqq \{0^n \mid n \in H_{m} \text{ and } O^{=n} \cap 00\Sigma^{n-2} \neq \emptyset\} \\
B_{m}^O &\coloneqq \{0^n \mid n \in H_{m} \text{ and } O^{=n}\cap 01\Sigma^{n-2} \neq \emptyset\}
\end{align*}
\end{definition}

\begin{observation}
    For all $m \in \N$ and all oracles $O$, it holds that $A_{m}^O, B_{m}^O \in \NP^O$.
\end{observation}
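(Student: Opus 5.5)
To show $A_m^O, B_m^O \in \NP^O$, I will give an explicit nondeterministic polynomial-time oracle machine for $A_m^O$; the argument for $B_m^O$ is identical with the prefix $01$ in place of $00$. In fact the machine will be uniform in $O$, which is what the later diagonalization needs.

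On input $x$, the machine $M$ first checks deterministically that $x = 0^n$ for some $n$, and rejects otherwise. Next it checks whether $n \in H_m$, which is possible in polynomial time by Observation~\ref{obs:Hi}. If $n \notin H_m$ it rejects. If $n \in H_m$ and $n \ge 2$, it nondeterministically guesses a word $y \in \Sigma^{n-2}$, queries $00y$ to the oracle $O$, and accepts iff the answer is positive. Here $00y$ is a word of length exactly $n$ with prefix $00$, and $n-2 \ge 0$ makes sense because every element of $H_m$ is a value $e(\cdot) \geq 2$.

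For correctness, $M^O$ has an accepting path on $0^n$ iff $n \in H_m$ and some $00y \in O$ with $|y| = n-2$, i.e.\ iff $O^{=n} \cap 00\Sigma^{n-2} \ne \emptyset$. This is exactly the definition of $A_m^O$ in Definition~\ref{def:witness-sets}.

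For the running time, each path costs $O(n)$ for the syntax check, plus polynomial time for the $H_m$ test, plus $n$ steps to write the query, plus one oracle query. The total is polynomial in $|x| = n$, and the bound does not depend on $O$.

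The only point needing care is that membership in $H_m$ is decidable in polynomial time. That is given by Observation~\ref{obs:Hi}, since $e$ grows as a tower, so for a given $n$ one can compute the finitely many relevant values $e(k) \le n$ and invert $\langle\cdot\rangle$ in polynomial time. Everything else is routine.
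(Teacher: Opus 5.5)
Your proof is correct and is the routine argument the paper leaves implicit for this observation. You check $x = 0^n$ with $n \in H_m$ (polynomial time by Observation~\ref{obs:Hi}), then nondeterministically guess $y \in \Sigma^{n-2}$ and query $00y$ (respectively $01y$), and you correctly note both that $n \geq 2$ on $H_m$ and that the time bound does not depend on $O$.
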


\begin{observation}\label{obs:witness-disjoint}
    Let $m \in \N$ and $O$ be an oracle. If for all $n \in H_{m}$ it holds that $O^{=n} \cap 00\Sigma^{n-2} = \emptyset$ or $O^{=n} \cap 01\Sigma^{n-2} = \emptyset$, then $(A_{m}^O, B_{m}^O) \in \DisjNP^O$.
\end{observation}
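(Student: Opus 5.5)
The statement to prove is Observation~\ref{obs:witness-disjoint}. Two things must be checked: $A_m^O$ and $B_m^O$ both lie in $\NP^O$, and they are disjoint.

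Membership in $\NP^O$ is exactly the preceding observation, so only disjointness needs an argument. The plan is a direct argument by contradiction.

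Suppose some word $z \in A_m^O \cap B_m^O$ exists. By Definition~\ref{def:witness-sets}, every element of either set has the form $0^n$ with $n \in H_m$. Hence $z = 0^n$ for some $n \in H_m$.

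Membership of $0^n$ in $A_m^O$ gives $O^{=n} \cap 00\Sigma^{n-2} \neq \emptyset$. Membership in $B_m^O$ gives $O^{=n} \cap 01\Sigma^{n-2} \neq \emptyset$. Both hold for the same $n \in H_m$, which contradicts the hypothesis that for every $n \in H_m$ at least one of these intersections is empty.

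Therefore $A_m^O \cap B_m^O = \emptyset$. Together with $A_m^O, B_m^O \in \NP^O$, this gives $(A_m^O, B_m^O) \in \DisjNP^O$.

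There is no real obstacle here. The only point to watch is that membership of $0^n$ in each set is determined by the same length $n$. This is immediate from the definition, because each set contains only words of the form $0^n$ and determines membership of $0^n$ solely by $O^{=n}$.
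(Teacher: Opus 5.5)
Your proposal is correct and is exactly the direct argument the paper leaves implicit for this observation. Membership in $\NP^O$ comes from the preceding observation. Disjointness holds because $0^n \in A_m^O \cap B_m^O$ would force both $O^{=n} \cap 00\Sigma^{n-2}$ and $O^{=n} \cap 01\Sigma^{n-2}$ to be nonempty for the same $n \in H_m$, contradicting the hypothesis.
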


\subparagraph{Witness proof systems.} 
For each candidate recursive jump operator $J_i$ we define many witness proof systems $W_{i,m}$ possibly witnessing that $J_i$ is no recursive jump operator for $\TAUT$, i.e., $J_i(\langle W_{i,m} \rangle) = b$ and either $P_b$ is no proof system for $\TAUT$ or $W_{i,m}$ simulates $P_b$.

\begin{theorem}[Relativized effective version of Kleene's fixed-point theorem \cite{Kle38,Kle52,Soa87}]\label{thm:fixed-point}
    There is a recursive function $\fp \colon \N^+ \to \N^+$ such that for all $i \in \N^+$ it holds that if $J_i$ is total, then $P_{\fp(i)}^O \equiv P_{J_i(\fp(i))}^O$ relative to any oracle $O$.
\end{theorem}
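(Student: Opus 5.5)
We prove this with the classical construction behind Kleene's recursion theorem, using the s-m-n theorem in a form that is uniform in the oracle. First we fix the standard enumeration so that the s-m-n property holds for oracle transducers. There is a total recursive function $s$ such that, for every code $e$ and argument $x$, the transducer $P_{s(e,x)}^O$ behaves as follows on input $y$, for every oracle $O$. It runs $P_e^O(x)$. If this halts with output $z$, it then simulates $P_z^O(y)$. The function $s$ only manipulates program text and never queries the oracle, so a single $s$ works for all $O$ at once. This syntactic construction is exactly what makes the result relativize.

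Next we define the diagonal function. Let $d(x) \coloneqq s(x,x)$. Then $P_{d(x)}^O$ runs $P_x^O(x)$ and, on output $z$, behaves like $P_z^O$. Given $i$, we effectively build a code $v(i)$ of a transducer that, on input $x$, first computes $d(x)$ and then runs $J_i$ on it, outputting $J_i(d(x))$. This machine does not query the oracle; the argument needs this because $J_i$ has to be read oracle-free here. We then set $\fp(i) \coloneqq d(v(i)) = s(v(i),v(i))$. Since $i \mapsto v(i)$ and $s$ are recursive, $\fp$ is recursive and total.

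Finally we verify the fixed-point property. Suppose $J_i$ is total. On any input, $P_{\fp(i)}^O$ first runs $P_{v(i)}(v(i))$, which outputs $J_i(d(v(i))) = J_i(\fp(i))$ and always halts by totality. It then behaves as $P_{J_i(\fp(i))}^O$. Because the preliminary computation is independent of the input and of $O$ and always halts, the two transducers compute the same function relative to every $O$. This gives $P^O_{\fp(i)} \equiv P^O_{J_i(\fp(i))}$. Running times may differ by the fixed overhead of computing $J_i(\fp(i))$. The statement only asks for equivalence as functions, so we do not need to control this overhead here.

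The main obstacle is interpreting ``$J_i$ is total'' when $J_i$ is an oracle transducer. If $J_i$ may query the oracle, its value can depend on $O$, and the fixed point would then depend on $O$ as well. We resolve this by applying $J_i$ to codes without an oracle, i.e.\ as an unrelativized recursive function on indices. This is the reading the construction requires: $\fp(i)$ is a single number that must work for all oracles simultaneously. Beyond that, the only delicate point is checking that the enumeration admits an oracle-uniform s-m-n function. For standard enumerations this holds because the simulation in $P_{s(e,x)}$ passes every oracle query straight through to the same oracle.
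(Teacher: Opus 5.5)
Your proof is correct, and it is the classical s-m-n diagonal construction behind the Kleene--Soare recursion theorem, which the paper only cites without proof. Passing oracle queries straight through the universal simulation is exactly how that argument relativizes, and your oracle-free reading of $J_i$ suffices, since the paper applies the theorem only to the oracle-free transducer computing $s_{i,m}$.

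One correction to your last paragraph: an oracle-querying $J_i$ would not make the fixed point depend on $O$. If $v(i)$ runs $J_i^O$ with the oracle, then $\fp(i)=d(v(i))$ is still a single oracle-independent number, and $P^O_{\fp(i)} \equiv P^O_{J_i^O(\fp(i))}$ holds for every $O$ on which $J_i^O(\fp(i))$ halts; only the target index varies with $O$.
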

We use this fixed-point theorem to define our witness proof systems $W_{i,m}$ such that they evaluate $J_i$ on their own program code. For this, let $i,m \in \N^+$ and let $s_{i,m} \colon \N^+ \to \N^+$ be a function that returns the code of the oracle-program defined in Algorithm \ref{alg:1} when given $z \in \N^+$ as input. Note that $O$ is a placeholder for an arbitrary oracle, let $F$ be a proof system for $\TAUT$ relative to all oracles, and recall that $p_t$ is the polynomial bounding the number of steps to non-deterministically decide $\overline{\TAUT}$ relative to any oracle. We provide some intuition of Algorithm \ref{alg:1} after Observation \ref{obs:Wim}. 

\begin{algorithm}
\caption{Witness proof system with code $s_{i,m}(z)$}\label{alg:1}
\begin{algorithmic}[1]
\State \textbf{Input:} $\langle x, y \rangle \in \Sigma^*$\label{alg1:line-1}
\State Compute $b \coloneqq J_i^O(z)$ \Comment{obtain ``jumped'' proof system}\label{alg1:line-2}
\If{$\langle 0^i,0^m,0^n \rangle \in O$ for any $n < |y|$} \Comment{oracle permits simulation}\label{alg1:line-3}
    \If{$\langle 1^i,1^m,1^n \rangle \notin O$ for all $n \leq p_t(|y|)$} \Comment{permission not revoked}\label{alg1:line-4}
        \If{$P_b^O(x)$ halts after $\leq |y|$ steps}\Comment{ensures polytime simulation}\label{alg1:line-5}
            \State \Return $P_b^O(x)$ \Comment{simulate ``jumped'' proof system}\label{alg1:line-6}
        \Else\label{alg1:line-7}
            \State \Return $F^O(x)$\label{alg1:line-8}
        \EndIf\label{alg1:line-9}
    \EndIf\label{alg1:line-10}
\EndIf\label{alg1:-line-11}
\State \Return $F^O(x)$\label{alg1:line-12}
\end{algorithmic}
\end{algorithm}
For all $i,m \in \N^+$, let $\langle s_{i,m} \rangle$ denote the program code of a polynomial-time Turing transducer computing $s_{i,m}$. The following observation shows that such a Turing transducer exists.
\begin{observation}\label{obs:fixed-point} Let $O$ be an arbitrary oracle. For all $i,m \in \N^+$ it holds that
    \begin{romanenumerate}
        \item $s_{i,m} \in \FP$.\label{obs:fixed-point-i}
        \item $P_{\fp(\langle s_{i,m} \rangle)}^O \equiv P_{s_{i,m}(\fp(\langle s_{i,m} \rangle))}^O$.\label{obs:fixed-point-iii} 
    \end{romanenumerate}
\end{observation}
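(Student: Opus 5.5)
For item (i), I would argue directly from the shape of Algorithm \ref{alg:1}. The only part of the program that depends on $z$ is the constant in line \ref{alg1:line-2}, where $J_i^O$ is evaluated on $z$. Everything else is fixed once $i$ and $m$ are fixed: the oracle tests in lines \ref{alg1:line-3} and \ref{alg1:line-4}, the clocked simulation of $P_b^O$ in line \ref{alg1:line-5}, the universal-simulation routine, and the code of $F$.

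So $s_{i,m}(z)$ can be produced by a fixed template. It concatenates a fixed prefix and suffix with a block that writes the constant $z$ onto a work tape, followed by the code of $J_i$ (a constant). This is the standard s-m-n construction. The output length is $|z|+O(1)$, where the constant depends only on $i,m$. Writing it takes linear time, so $s_{i,m}\in\FP$. Importantly, $s_{i,m}$ never runs $J_i$. It only writes down a program that will run $J_i$, so whether $J_i$ halts is irrelevant here. The construction is oracle-independent, since oracle access only appears inside the produced program.

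For item (ii), I would apply Theorem \ref{thm:fixed-point} to the index $\langle s_{i,m}\rangle$. This needs the enumeration convention that $J_{\langle s_{i,m}\rangle}$ is the transducer computing $s_{i,m}$. By (i), $s_{i,m}$ is total, so the hypothesis ``$J_{\langle s_{i,m}\rangle}$ is total'' holds. The theorem then gives, relative to every oracle $O$,
\[P^O_{\fp(\langle s_{i,m}\rangle)}\equiv P^O_{J_{\langle s_{i,m}\rangle}(\fp(\langle s_{i,m}\rangle))}=P^O_{s_{i,m}(\fp(\langle s_{i,m}\rangle))}.\]
The oracle-uniformity is exactly what the relativized theorem supplies, because the fixed point is obtained syntactically from the index, with no reference to $O$.

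The only delicate point I anticipate is in (i): making precise that the code written by $s_{i,m}$ is computable in polynomial time, independently of how long the encoded algorithm itself runs. I would handle this by fixing the template once and citing the polynomial-time s-m-n property of the standard enumeration.
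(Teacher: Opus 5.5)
Your proposal is correct and follows the paper's proof. The paper treats (i) as clear, and your s-m-n template argument just supplies those details. For (ii) the paper does exactly what you do: it invokes Theorem \ref{thm:fixed-point} at index $\langle s_{i,m}\rangle$, using the totality of $s_{i,m}$.
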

\begin{proof}
    Let $i,m \in \N^+$ be arbitrary. It is clear that (\ref{obs:fixed-point-i}) holds. 
    Property (\ref{obs:fixed-point-iii}) follows by the totality of $s_{i,m}$ and Theorem \ref{thm:fixed-point} invoked with $\langle s_{i,m} \rangle$ for $i$.
\end{proof}
\begin{definition}[Witness proof systems]\label{def:witness-function-jump-operator}
    Let $O$ be an arbitrary oracle and let $i,m \in \N^+$. Define $W_{i,m}^O \coloneqq P_{\fp(\langle s_{i,m} \rangle)}^O \equiv P_{s_{i,m}(\fp(\langle s_{i,m} \rangle))}^O$.   
\end{definition}
\begin{observation}\label{obs:Wim}
    The witness proof system $W_{i,m}$ has code $\fp(\langle s_{i,m} \rangle)$ and works like Algorithm \ref{alg:1} for $i$, $m$, and $z \coloneqq \fp(\langle s_{i,m} \rangle)$.
\end{observation}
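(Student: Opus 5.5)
The statement to establish is Observation~\ref{obs:Wim}: $W_{i,m}$ has code $\fp(\langle s_{i,m}\rangle)$ and behaves like Algorithm~\ref{alg:1} with $z \coloneqq \fp(\langle s_{i,m}\rangle)$. The plan is to unfold Definition~\ref{def:witness-function-jump-operator} and then apply Observation~\ref{obs:fixed-point}; no further ingredients are needed.

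\textbf{Step 1: the code.} By Definition~\ref{def:witness-function-jump-operator}, $W_{i,m}^O$ is defined as the machine $P^O_{\fp(\langle s_{i,m}\rangle)}$. Hence its code in the enumeration $\{P_j\}$ is $\fp(\langle s_{i,m}\rangle)$, and this holds independently of the oracle $O$.

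\textbf{Step 2: the behaviour.} By Observation~\ref{obs:fixed-point}(\ref{obs:fixed-point-iii}), we have
\[
P^O_{\fp(\langle s_{i,m}\rangle)} \equiv P^O_{s_{i,m}(\fp(\langle s_{i,m}\rangle))}
\]
relative to every oracle $O$. This is an instance of Theorem~\ref{thm:fixed-point} applied to the machine with code $\langle s_{i,m}\rangle$. That machine is total because $s_{i,m} \in \FP$ by Observation~\ref{obs:fixed-point}(\ref{obs:fixed-point-i}), so the hypothesis of Theorem~\ref{thm:fixed-point} is met.

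\textbf{Step 3: reading off $s_{i,m}$.} By definition, $s_{i,m}(z)$ is the code of the oracle program of Algorithm~\ref{alg:1} with parameters $i,m,z$. Setting $z \coloneqq \fp(\langle s_{i,m}\rangle)$ therefore shows that $P^O_{s_{i,m}(z)}$ runs Algorithm~\ref{alg:1} with this $z$. Since $\equiv$ means the two machines compute the same function, $W^O_{i,m}$ computes exactly what Algorithm~\ref{alg:1} computes for $i$, $m$ and $z = \fp(\langle s_{i,m}\rangle)$. In particular, line~\ref{alg1:line-2} evaluates $J_i^O$ on the code of $W_{i,m}$ itself.

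The only point needing care is the meaning of ``works like''. The fixed-point theorem gives functional equivalence, not identity of programs. So the claim should be read as equality of the computed functions relative to every oracle, which is what Observation~\ref{obs:fixed-point} supplies. I do not expect a real obstacle beyond checking totality of $s_{i,m}$, as done in Step~2.
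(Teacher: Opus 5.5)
Your proposal is correct and matches the paper's reasoning. The paper gives no separate proof: the observation is read off from Definition~\ref{def:witness-function-jump-operator}, Observation~\ref{obs:fixed-point}(\ref{obs:fixed-point-iii}) (which rests on the totality of $s_{i,m}$, as you note) and the definition of $s_{i,m}(z)$ as the code of Algorithm~\ref{alg:1}.

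Your caveat that Theorem~\ref{thm:fixed-point} only yields functional equivalence is apt. Lemma~\ref{lem:Wim-polytime} later tacitly uses the stronger reading that the program with code $\fp(\langle s_{i,m}\rangle)$ itself runs in polynomial time. This does hold for the standard recursion-theorem construction: that program computes the code $s_{i,m}(\fp(\langle s_{i,m}\rangle))$ in input-independent time and then simulates it.
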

With this, for every candidate recursive jump operator $J_i$, we have infinitely many witness proof systems $W_{i,m}$ behaving like Algorithm \ref{alg:1} where $z = \langle W_{i,m} \rangle$. So $W_{i,m}$ first evaluates $J_i$ on its own program code and obtains the code of a supposedly stronger proof system $P_b$. Depending on information in the oracle, $W_{i,m}$ is allowed to simulate $P_b$. Here, the oracle should only allow the simulation of $P_b$ if $\ran(P_b) \subseteq \TAUT$. Lastly, $W_{i,m}$ only simulates $P_b$ if the simulation can be done in a polynomial number of steps, ensuring that $W_{i,m}$ remains polynomial-time computable.

The following results show that if $P_b$ is a proof system and the oracle has the right conditions, $W_{i,m}$ is a proof system for $\TAUT$ that simulates $P_b$.
\begin{lemma}\label{lem:Wim-polytime}
    For all $i,m \in \N^+$ and all oracles $O$ it holds that if $b \coloneqq J_i^O(\fp(\langle s_{i,m} \rangle))$ is defined, then $W_{i,m}^O$ is total and runs in polynomial time. If additionally $\ran(P_b^O) \subseteq \TAUT^O$, then $W_{i,m}^O$ is a proof system for $\TAUT^O$.
\end{lemma}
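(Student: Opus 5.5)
By Observation \ref{obs:Wim}, $W_{i,m}^O$ behaves exactly like Algorithm \ref{alg:1} with $z \coloneqq \fp(\langle s_{i,m}\rangle)$. So I analyse that algorithm line by line, treating $b = J_i^O(z)$ as a fixed constant depending only on $i$, $m$ and $O$, not on the input $\langle x,y\rangle$. Since $b$ is defined by assumption, line \ref{alg1:line-2} halts after a constant number of steps $c_{i,m,O}$ on every input. This constant does not depend on the input length.

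\textbf{Totality and polynomial running time.} I will bound the remaining lines in turn.
\begin{itemize}
\item Lines \ref{alg1:line-3} and \ref{alg1:line-4} ask polynomially many oracle queries, namely $|y|$ and $p_t(|y|)+1$ of them. Each queried word has the form $\langle 0^i,0^m,0^n\rangle$ or $\langle 1^i,1^m,1^n\rangle$, whose length is $O(i+m+n)$ and hence polynomial in $|y|$. Writing these words and querying them is therefore polynomial-time in $|\langle x,y\rangle|$.
\item Line \ref{alg1:line-5} is a clocked simulation of $P_b^O(x)$ for at most $|y|$ steps, using the fixed code $b$. Universal simulation overhead is polynomial, with constants depending on $b$.
\item Line \ref{alg1:line-6} only returns an output that was already computed within $|y|$ steps.
\item Lines \ref{alg1:line-8} and \ref{alg1:line-12} run $F^O(x)$, and $F$ is a polynomial-time proof system for $\TAUT$ relative to every oracle.
\end{itemize}
Every branch halts, so $W_{i,m}^O$ is total, and its running time is $c_{i,m,O}$ plus a polynomial in $|x|+|y|$. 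One subtlety is that $W_{i,m}$ is not literally a transducer with a fixed polynomial clock. The notion used in the paper is "runs in polynomial time", which is satisfied because the overhead of computing $b$ is a constant.

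\textbf{Proof system for $\TAUT^O$.} Assume additionally $\ran(P_b^O)\subseteq\TAUT^O$. For the inclusion $\ran(W_{i,m}^O)\subseteq\TAUT^O$, every output is either $F^O(x)\in\TAUT^O$ or an output of $P_b^O(x)$, which lies in $\TAUT^O$ by the assumption. For surjectivity, take any $\varphi\in\TAUT^O$ and choose $x$ with $F^O(x)=\varphi$. Consider the input $\langle x,\varepsilon\rangle$, that is, $y=\varepsilon$. The condition in line \ref{alg1:line-3} requires some $n<|y|=0$, so it fails and the algorithm returns $F^O(x)=\varphi$ in line \ref{alg1:line-12}. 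Hence $\ran(W_{i,m}^O)=\TAUT^O$. Combined with polynomial running time, this makes $W_{i,m}^O$ a proof system for $\TAUT^O$.

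The only delicate point I expect is the running-time bookkeeping, specifically that the cost of computing $J_i^O(z)$ is constant and that the clocked simulation of $P_b$ costs polynomially in $|y|$. I would handle this with a standard remark on clocked universal simulation rather than a detailed count.
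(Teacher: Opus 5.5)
Your proposal is correct and follows essentially the same argument as the paper. Both go through Algorithm \ref{alg:1} line by line: line \ref{alg1:line-2} costs an input-independent constant, lines \ref{alg1:line-3}--\ref{alg1:line-4} make polynomially many polynomial-size queries, line \ref{alg1:line-5} is a clocked simulation, and lines \ref{alg1:line-8} and \ref{alg1:line-12} run $F^O$; surjectivity comes from the inputs $\langle x,\varepsilon\rangle$, which always reach line \ref{alg1:line-12}.
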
 
\begin{proof}
    Consider $W_{i,m}^O$. Line \ref{alg1:line-2} does not depend on the size of the input and halts by assumption, thus requiring only constant time. Lines \ref{alg1:line-3} and \ref{alg1:line-4} are about at most polynomially many polysize oracle queries. Lines \ref{alg1:line-5} and \ref{alg1:line-6} are about simulating computations for a polynomially bounded number of steps. Finally, lines \ref{alg1:line-8} and \ref{alg1:line-12} simulate a proof system for $\TAUT^O$. Hence, $W_{i,m}^O$ runs in polynomial time.

    Line \ref{alg1:line-2} computes the number $b$ from the assumption. Any output is either computed by $P_b^O$ or $F^O$, whose ranges are subsets of $\TAUT^O$. Hence, $\ran(W_{i,m}) \subseteq \TAUT^O$. Furthermore, line \ref{alg1:line-3} never evaluates to TRUE for inputs of the form $\langle x, \varepsilon \rangle$, so line \ref{alg1:line-12} is executed for all $x \in \Sigma^*$ at least once, resulting in $\ran(W_{i,m}^O) \supseteq \ran(F^O) = \TAUT^O$. Consequently, $\ran(W_{i,m}^O) = \TAUT^O$. Together with $W_{i,m}^O$ running in polynomial time, we get that $W_{i,m}^O$ is a proof system for $\TAUT^O$.
\end{proof}
\begin{lemma}\label{lem:Wim-simulate}
    Let $i,m \in \N^+$ and $O$ be an arbitrary oracle. Let $b \coloneqq J_i^O(\fp(\langle s_{i,m} \rangle))$ be defined and let $P_b^O$ be a proof system for $\TAUT^O$. If $\langle 0^i, 0^m, 0^n \rangle \in O$ for some $n \in \N$ and $\langle 1^i, 1^m, 1^n \rangle \notin O$ for all $n \in \N$, then it holds that $W_{i,m}^O$ is a proof system for $\TAUT^O$ simulating $P_b^O$.
\end{lemma}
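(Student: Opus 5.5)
First, $W_{i,m}^O$ is a proof system for $\TAUT^O$ by Lemma \ref{lem:Wim-polytime}. Its hypotheses hold because $b$ is defined and $P_b^O$ is a proof system for $\TAUT^O$, so in particular $\ran(P_b^O) \subseteq \TAUT^O$. What remains is to construct a polynomially bounded simulation function $\pi$ with $W_{i,m}^O(\pi(w)) = P_b^O(w)$ for all $w$.

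Fix some $n_0$ with $\langle 0^i,0^m,0^{n_0}\rangle \in O$. Since $P_b^O$ is a proof system, it is a polynomial-time transducer, so some polynomial $r$ bounds its running time. I will take
\[\pi(w) \coloneqq \langle w, 1^{\ell(|w|)} \rangle \quad\text{with}\quad \ell(k) \coloneqq \max\{n_0+1,\, r(k)\}.\]
This $\pi$ is total. By the length formula for $\langle\cdot\rangle$, $|\pi(w)| = 2(|w| + \ell(|w|) + 2)+1$, which is polynomial in $|w|$ because $n_0$ is a constant.

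Now I trace Algorithm \ref{alg:1} on input $\langle x,y\rangle = \pi(w)$, using Observation \ref{obs:Wim} with $z = \fp(\langle s_{i,m}\rangle)$.
\begin{itemize}
\item Line \ref{alg1:line-2} computes exactly $b$.
\item Line \ref{alg1:line-3} succeeds, since $n_0 < |y|$ and $\langle 0^i,0^m,0^{n_0}\rangle \in O$.
\item Line \ref{alg1:line-4} succeeds, since by hypothesis no word $\langle 1^i,1^m,1^n\rangle$ lies in $O$ for any $n$, in particular for any $n \le p_t(|y|)$.
\item Line \ref{alg1:line-5} succeeds, since $P_b^O(w)$ halts within $r(|w|) \le |y|$ steps.
\end{itemize}
Hence line \ref{alg1:line-6} returns $P_b^O(w)$, that is, $W_{i,m}^O(\pi(w)) = P_b^O(w)$. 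Therefore $P_b^O \leq W_{i,m}^O$, and $\pi$ is in fact polynomial-time computable, which even gives p-simulation.

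The only delicate point is choosing the padding length $|y|$. It must exceed the fixed index $n_0$ so that the guard in line \ref{alg1:line-3} fires, and it must dominate the runtime of $P_b^O$ so that the clock in line \ref{alg1:line-5} does not cut the simulation off. Taking the maximum of the two handles both. This relies on $P_b^O$ being a polynomial-time transducer, which holds because the paper identifies proof systems with polynomial-time Turing transducers.
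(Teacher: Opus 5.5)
Your proposal is correct and follows the paper's proof. You pad the second component so that the permission check in line~\ref{alg1:line-3} fires and the clock in line~\ref{alg1:line-5} does not cut off $P_b^O$, while lines~\ref{alg1:line-2} and~\ref{alg1:line-4} are settled directly by the hypotheses; your padding length $\max\{n_0+1, r(|w|)\}$ is slightly cleaner than the paper's $0^{p(|x|)}$, because it covers every input uniformly rather than leaving the finitely many short exceptions to an implicit ``all but finitely many'' argument.
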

\begin{proof}
    By Lemma \ref{lem:Wim-polytime} and $P_b^O$ being a proof system for $\TAUT^O$, it follows that $W_{i,m}^O$ is also a proof system for $\TAUT^O$. 

    Let $p$ be a strictly monotone increasing polynomial bounding the runtime of the proof system $P_b^O$. Let $x$ be arbitrary such that line \ref{alg1:line-3} evaluates to TRUE on input $\langle x, 0^{p(|x|)} \rangle$. Only finitely many $x$ do not satisfy this.
    Also by assumption, line \ref{alg1:line-4} always evaluates to TRUE. Then for the input $\langle x, 0^{p(|x|)} \rangle$ line \ref{alg1:line-5} evaluates to TRUE, because $P_b^O(x)$ halts in $\leq p(|x|)$ steps. Hence, for all but finitely many $x$ it holds that $W_{i,m}(\langle x, 0^{p(|x|)} \rangle) = P_b^O(x)$. Consequently, $W_{i,m}^O$ has at most polynomially longer proofs than $P_b^O$, so $W_{i,m}^O$ simulates $P_b^O$.
\end{proof}

\subparagraph{Valid oracles.}
During the construction of the oracle, we successively add requirements that we maintain. These are specified by a function $r \colon \N^+ \times \N \times \N \mapsto \N$, called \emph{requirement function}. Recall that $A \subseteq 1\Sigma^*$ denotes an arbitrary base oracle. An oracle $w$ is $r$-valid for a requirement function $r$, if it satisfies the following requirements (let $i,j,m \in \N^+$):
\begin{enumerate}[V1]
    \item\label{V1} $w \subseteq 0\Sigma^*$ and  $\card{(w \cap \Sigma^{n})} \leq 2$ for all $n \in \N^+$.
    
    (Meaning: $w \subseteq 2\N+1$ and $w$ is sparse.)

    \item\label{V2} If $\langle 0^i, 0^m, 0^n \rangle \in w$ or $\langle 1^i, 1^m, 1^n \rangle \in w$ for some $n \in \N$, then $n > \runtime(J_i^{w \cup A}(\fp(\langle s_{i,m} \rangle)))$.
    
    (Meaning: Codings for $W_{i,m}$ can not be queried during the simulation in line \ref{alg1:line-2}.)
    \item\label{V3} If $r(i,0,0) = m > 1$, then $\langle 1^i, 1^m, 1^n \rangle \notin w$ for all $n \in \N$.
    
    (Meaning: Here, $W_{i,m}$ commits to also return by line \ref{alg1:line-6}.)

    \item\label{V4} If $r(i,j,0) = m > 0$, then $\card{(w \cap \Sigma^{n})} \leq 1$ for all $n \in H_m$.
    
    (Meaning: $(A_m,B_m)$ is disjoint relative to $A$ combined with any extension of $w$.)

\end{enumerate}
We will prove that V\ref{V1}, V\ref{V2}, V\ref{V3} and V\ref{V4} are satisfied for various oracles and requirement functions. To prevent confusion, we use the notation V\ref{V1}($w$), V\ref{V2}($w$), V\ref{V3}($w,r$) and V\ref{V4}($w,r$) to clearly state the referred oracle $w$ and requirement function $r$.
\begin{observation}\label{obs:remain-valid}
    If $w$ is an $r$-valid oracle, then $w0$ is $r$-valid.
\end{observation}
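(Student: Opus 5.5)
We verify the four requirements V\ref{V1}--V\ref{V4} for $w0$ one at a time, with the same requirement function $r$. The guiding fact is that $w0$ and $w$ contain exactly the same words, viewed as sets. The only difference is that $w0$ is additionally defined, with answer ``no'', on the single word $|w|$. Since queries to undefined positions of a partial oracle are answered negatively, every oracle computation relative to $w0 \cup A$ behaves exactly as relative to $w \cup A$.

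First, V\ref{V1}($w0$), V\ref{V3}($w0,r$) and V\ref{V4}($w0,r$) are purely membership conditions on the set of words. V\ref{V1} asks that $w0 \subseteq 0\Sigma^*$ and that each length contains at most two words. V\ref{V3} forbids the words $\langle 1^i,1^m,1^n \rangle$ under the stated condition on $r$. V\ref{V4} bounds $\card{(w0 \cap \Sigma^n)}$ for $n \in H_m$. Because $w0$ and $w$ are equal as sets, each of these transfers directly from the corresponding property of $w$.

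For V\ref{V2}($w0$), suppose $\langle 0^i,0^m,0^n \rangle \in w0$ or $\langle 1^i,1^m,1^n \rangle \in w0$. The same word then lies in $w$, so V\ref{V2}($w$) gives $n > \runtime(J_i^{w \cup A}(\fp(\langle s_{i,m} \rangle)))$. It remains to show that this runtime is unchanged when $w$ is replaced by $w0$. The oracles $w \cup A$ and $w0 \cup A$ answer every query identically: position $|w|$ is answered negatively in both, as undefined in one and as $0$ in the other. Hence the two computations, including whether they halt and how many steps they take, coincide, and the inequality carries over.

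The only step needing care is this last one, namely making sure that the convention ``undefined queries are answered negatively'' is what guarantees identical runtimes. Everything else is immediate, so the observation follows.
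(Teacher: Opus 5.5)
Your proof is correct and is exactly the reasoning the paper leaves implicit: $w$ and $w0$ are the same set, so $w \cup A = w0 \cup A$, and V1--V4 are conditions on that set and on computations relative to it. One small nit is that the query $|w|$ need not be answered negatively, since it is answered according to $A$ and could be positive if $|w| \in A$. This is harmless, because the answers coincide exactly by the set equality you already stated.
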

\begin{observation}\label{obs:valid-monotone}
    If $w$ is $r$-valid, then $w$ is also $r'$-valid for any $r' \sqsubseteq r$.
\end{observation}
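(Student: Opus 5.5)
The plan is to check the four validity conditions one at a time for $w$ and $r'$, using only that $r' \sqsubseteq r$ means $\dom(r') \subseteq \dom(r)$ and that $r'$ agrees with $r$ on $\dom(r')$. I read the premises of V\ref{V3} and V\ref{V4} ("$r(i,0,0) = m$", "$r(i,j,0) = m$") as asserting that the argument lies in the domain and has that value. At undefined arguments the premise therefore fails and the condition is vacuous.

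\textbf{Conditions not involving the requirement function.} V\ref{V1}($w$) and V\ref{V2}($w$) mention only $w$, the fixed base oracle $A$, the machines $J_i$ and the fixed-point codes $\fp(\langle s_{i,m} \rangle)$. They do not mention the requirement function at all. So they hold for $w$ as an $r'$-valid candidate exactly because they hold for $w$ as an $r$-valid oracle.

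\textbf{Condition V\ref{V3}.} Fix $i \in \N^+$ and suppose the premise of V\ref{V3}($w,r'$) holds, i.e.\ $r'(i,0,0) = m > 1$. Then $(i,0,0) \in \dom(r') \subseteq \dom(r)$, and because $r \sqsupseteq r'$ we get $r(i,0,0) = r'(i,0,0) = m > 1$. Now V\ref{V3}($w,r$) gives $\langle 1^i,1^m,1^n \rangle \notin w$ for all $n \in \N$. This conclusion depends only on $w$, $i$ and $m$, so V\ref{V3}($w,r'$) holds.

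\textbf{Condition V\ref{V4}.} Fix $i,j \in \N^+$ and suppose $r'(i,j,0) = m > 0$. As before, $r(i,j,0) = m > 0$. Then V\ref{V4}($w,r$) gives $\card{(w \cap \Sigma^{n})} \leq 1$ for all $n \in H_m$, and again the conclusion does not involve the requirement function. So V\ref{V4}($w,r'$) holds.

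There is no real obstacle here. The only point needing care is the convention that a premise of the form $r'(\cdot) = m$ is false at undefined arguments. This is exactly what makes the requirements monotone: shrinking $r$ removes premises and never adds any.
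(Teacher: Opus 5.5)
Your proposal is correct and is essentially the reasoning the paper relies on; the paper states this as an observation without proof. The argument is exactly yours: V1 and V2 do not mention the requirement function, and for V3 and V4 any premise $r'(\cdot)=m$ transfers to $r$ because $r$ extends $r'$, while the conclusions depend only on $w$.
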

\begin{observation}\label{obs:valid-monoton2}
    If $w$ is $r$-valid and $v \sqsubseteq w$, then $v$ is also $r$-valid.
\end{observation}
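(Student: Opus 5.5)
The plan is to check the four requirements V\ref{V1}--V\ref{V4} one by one for $v$, using only $v \sqsubseteq w$ and the $r$-validity of $w$. Since $v \sqsubseteq w$, the set $v$ is contained in $w$: every word in $v$ lies in $w$, and $v$ and $w$ agree on all words $< |v|$.

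For V\ref{V1}($v$), $V\ref{V3}$($v,r$) and V\ref{V4}($v,r$) the argument is pure monotonicity. Each of these requirements says that certain words are absent, or that certain cardinalities $\card{(\,\cdot \cap \Sigma^n)}$ are bounded. All of these properties are preserved under taking subsets. So $v \subseteq w \subseteq 0\Sigma^*$ together with $\card{(v\cap\Sigma^n)} \le \card{(w \cap \Sigma^n)}$ immediately transfers V\ref{V1} and V\ref{V4}. Likewise, $\langle 1^i,1^m,1^n\rangle \notin w$ implies $\langle 1^i,1^m,1^n\rangle \notin v$, which gives V\ref{V3}. The hypothesis on $r$ is the same in both cases, so no further work is needed here.

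The only step needing care is V\ref{V2}($v$), because its conclusion refers to $\runtime(J_i^{v\cup A}(\fp(\langle s_{i,m}\rangle)))$, which depends on the oracle. Suppose $q \coloneqq \langle 0^i,0^m,0^n\rangle \in v$ (the $1$-case is symmetric). Then $q \in w$, so V\ref{V2}($w$) gives $n > t \coloneqq \runtime(J_i^{w\cup A}(\fp(\langle s_{i,m}\rangle)))$; in particular this computation halts.

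I will argue that the computation relative to $v\cup A$ is step-by-step identical to the one relative to $w \cup A$. The computation $J_i^{w\cup A}(\fp(\langle s_{i,m}\rangle))$ only queries words of length at most $t < n$. By the length formula for $\langle\cdot\rangle$, we have $|q| = 2(i+m+n+3)+1 > n$. Under the identification $\mathrm{enc}$, every word shorter than $q$ is a smaller number than $q$. Since $q \in v$, we have $q < |v|$. Hence every queried word is $< |v|$, so $v$ and $w$ answer it identically.

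The $A$-part is the same in both oracles. Therefore the two computations coincide, have the same runtime $t$, and $n > \runtime(J_i^{v\cup A}(\fp(\langle s_{i,m}\rangle)))$, which is V\ref{V2}($v$). This length-versus-order bookkeeping is the only genuine obstacle; everything else is routine.
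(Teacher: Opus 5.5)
Your proof is correct and matches the reasoning the paper leaves implicit for this observation, which it states without proof. V1, V3 and V4 pass to $v$ because $v \subseteq w$ as sets. V2 passes because a coding word $q \in v$ satisfies $q < |v|$, while the halting computation $J_i^{w \cup A}(\fp(\langle s_{i,m} \rangle))$ only queries words shorter than $q$; hence it behaves identically relative to $v \cup A$ and has the same runtime.
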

\subparagraph{Oracle construction.}
The oracle construction will take care of the following set of tasks $\{\tau_{i,j,k} \mid i \in \N^+, j,k \in \N\}$ that are defined below. Let $\mathcal{T}$ be an enumeration of these tasks with the property that $\tau_{i,j,k}$ appears earlier than $\tau_{i,j,k+1}$. In each step we treat the smallest task in the order specified by $\mathcal{T}$, and after treating a task we remove it from $\mathcal{T}$.

In step $s = 0$ we define $w_0 \coloneqq \varepsilon$ and $r_0$ as the nowhere defined requirement function. In step $s > 0$ we define $w_s \sqsupsetneq w_{s-1}$ and $r_s \sqsupsetneq r_{s-1}$ such that $w_s$ is $r_s$-valid by treating the earliest task $\tau$ in $\mathcal{T}$ and removing it from $\mathcal{T}$. We do this according to the following procedure (let $i,j,k \in \N^+$):
\bigskip
\\
\centerline{\textbf{Task} $\tau_{i,0,0}$}
\begin{enumerate}
\item\label{task1:1} If there exists an $r_{s-1}$-valid partial oracle $v \sqsupsetneq w_{s-1}$ such that there are $a \in \N^+$ and $x \in \Sigma^*$ with 
\begin{enumerate}[a.]
    \item\label{task1:1-i} $P_a$ is a proof system for $\TAUT$ relative to $A$ combined with any fully defined $r_{s-1}$-valid extension of $v$,
    \item\label{task1:1-ii} $J_i^{v \cup A}(a) = b$ is definite,
    \item\label{task1:1-iii} $P_b^{v \cup A}(x) \notin \TAUT^{v \cup A}$ is definite,
\end{enumerate} 
then define $w_s \coloneqq v$ and $r_s \coloneqq r_{s-1} \cup \{(i,0,0) \mapsto 0\}$. 

(Meaning: $J_i$ will not be a jump operator, because $J_i$ maps some code $a$ to some code $b$, while $P_a$ will be and $P_b$ will not be a proof system for $\TAUT$ relative to the final oracle.)
\item\label{task1:2} Otherwise, if there exists 
an $a \in \N^+$ such that $J_i(a)$ never halts relative to $A$ combined with any $r_{s-1}$-valid partial extension of $w_{s-1}$, then define $w_s \coloneqq w_{s-1}0$ and $r_s \coloneqq r_{s-1} \cup \{(i,0,0) \mapsto 1\}$.

(Meaning: $J_i$ will not be a recursive jump operator, because $J_i$ will not be total.)

\item\label{task1:3} Otherwise,
\begin{enumerate}[a.]
    \item\label{task1:3-i} let $m \coloneqq |w_{s-1}|+2$,
    \item\label{task1:3-ii} let $v' \sqsupseteq w_{s-1}$ be the smallest $r_{s-1}$-valid partial oracle such that $J_i^{v' \cup A}(\fp(\langle s_{i,m} \rangle))$ halts in less than $||v'||$ steps,
    \item\label{task1:3-iii} and let $v$ be an extension of $v'$ via zeros such that $\langle 0^i, 0^m, 0^{||v'||} \rangle \in v1$.
\end{enumerate}
Then define $w_s \coloneqq v1$ and $r_s \coloneqq r_{s-1} \cup \{(i,0,0) \mapsto m\}$.

(Meaning: Choose a sufficiently large $m$ such that $w_{s-1}$ can not contain words of the form $\langle 0^i, 0^m, \cdot \rangle$ and let $w_s$ contain the permission such that $W_{i,m}$ can return by line \ref{alg1:line-6}, thereby ruling out $J_i$ as a jump operator. Statement \ref{task1:3-ii} is possible since Case \ref{task1:2} failed.)
\end{enumerate}
\ 
\\   
\centerline{\textbf{Task} $\tau_{i,j,0}$}
\begin{enumerate}
    \item\label{task2:1} If there exists an $r_{s-1}$-valid partial oracle $v \sqsupsetneq w_{s-1}$ such that there is an $x$ with $N_i^{v \cup A}(x)$ and $N_j^{v \cup A}(x)$ accept definitely, then define $w_s \coloneqq v$ and $r_s \coloneqq r_{s-1} \cup \{(i,j,0) \mapsto 0\}$. Remove all tasks $\tau_{i,j,k}$ with $k \in \N^+$ from $\mathcal{T}$.
     
    (Meaning: The pair $(L(N_i),L(N_j))$ is not disjoint.)
    \item\label{task2:2} Otherwise, define $w_s \coloneqq w_{s-1}0$ and $r_s \coloneqq r_{s-1} \cup \{(i,j,0) \mapsto |w_s|+2\}$.
    
    (Meaning: Assign the pair $(L(N_i),L(N_j))$ some witness pair $(A_{|w_s|+2},B_{|w_s|+2})$ and promise to keep the witness-pair disjoint.)
\end{enumerate}
\ 
\\
\centerline{\textbf{Task} $\tau_{i,j,k}$}
\begin{enumerate}
    \item\label{task3:1} Let $m \coloneqq r_{s-1}(i,j,0)$ and let $v \sqsupseteq w_{s-1}$ be $r_{s-1}$-valid such that $||v|| = 0^n$ with $n \in H_m$ and $2^{n-2} > 6(p_{i+j}(p_k(n)))^2+1$. Choose 
\begin{enumerate}[a.]
    \item\label{task3:1-i} either $z \in 00\Sigma^{n-2}$ such that $N_i^{v \cup \{z\} \cup A}(F_k^{v \cup \{z\} \cup A}(0^n))$ rejects
    \item\label{task3:1-ii} or $z \in 01\Sigma^{n-2}$ such that $N_j^{v \cup \{z\} \cup A}(F_k^{v \cup \{z\} \cup A}(0^n))$ rejects.
\end{enumerate}
Then define $r_s \coloneqq r_{s-1} \cup \{(i,j,k) \mapsto 0\}$ and define $w_s \sqsupseteq v$ such that it contains the same words as $v \cup \{z\}$ and $||w_s|| = 2p_{i+j}(p_k(n))+1$. 

(Meaning: The witness pair $(A_m,B_m)$ is not reducible to $(L(N_i),L(N_j))$ via $F_k$. It must be shown that $z$ can be chosen as stated. Intuitively, this follows from $r_{s-1}(i,j,0) \neq 0$.)
\end{enumerate}

\begin{definition}[Desired Oracle]\label{def:desired-oracle}
    Define $O \coloneqq \bigcup_{s \in \N} w_s$ and $r \coloneqq \bigcup _{s \in \N} r_s$.
\end{definition}
It remains to show that Definition \ref{def:desired-oracle} is well-defined (i.e., that all steps of the oracle construction can be performed and makes real progress) and that $O$ satisfies the desired properties. In Lemma~\ref{lem:ws-rs-valid}, we show that $O$ can be constructed as stated, in particular, we show that Case \ref{task1:3} of task $\tau_{i,0,0}$ and Case \ref{task3:1} of task $\tau_{i,j,k}$ are possible. The Propositions \ref{prop:O-is-sparse}, \ref{prop:no-jump-operator}, and \ref{prop:no-optimal-ps} show that $O$ satisfies all the desired properties.

\begin{lemma}\label{lem:ws-rs-valid}
    For all $s \in \N$ it holds that $w_s$, $r_s$ are well-defined, $w_s \sqsubsetneq w_{s+1}$, $r_s \sqsubsetneq r_{s+1}$ and that $w_s$ is $r_s$-valid.
\end{lemma}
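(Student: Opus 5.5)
We argue by induction on $s$. The base case $s=0$ is immediate: $w_0=\varepsilon$ is trivially $r_0$-valid for the nowhere defined $r_0$, because V\ref{V1}–V\ref{V4} only constrain words in $w_0$ or values of $r_0$. For the step, assume $w_{s-1}$ is $r_{s-1}$-valid. We split into cases by the treated task and by which case of that task applies. In each case we show three things: the case can be carried out, $w_s\sqsupsetneq w_{s-1}$ and $r_s\sqsupsetneq r_{s-1}$ hold, and $w_s$ is $r_s$-valid. Strict extension of $r$ is clear, since each step defines $r_s$ on a new triple and tasks are removed after treatment. Strict extension of $w$ is also clear in every case: we use $v\sqsupsetneq w_{s-1}$, or append $0$, or append a $1$ after $v$, or choose $w_s$ of strictly larger length.

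\textbf{Easy cases.} Case \ref{task1:1} of $\tau_{i,0,0}$ and Case \ref{task2:1} of $\tau_{i,j,0}$ take $w_s=v$, which is $r_{s-1}$-valid. The new value of $r$ is $0$, so V\ref{V3} and V\ref{V4} impose nothing new, and $w_s$ is $r_s$-valid. Case \ref{task1:2} and Case \ref{task2:2} append a $0$, so they follow from Observation \ref{obs:remain-valid}. For Case \ref{task2:2} we still need to check that the new V\ref{V4} constraint, for $m=|w_s|+2$, holds. This is fine because $w_s$ contains no words of length $\ge |w_s|>$ any $n\in H_m$ that are... more precisely, every $n\in H_m$ satisfies $n\ge e(\langle m,0\rangle)>|w_s|$, so $w_s$ has no words of those lengths.

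\textbf{Case \ref{task1:3} of $\tau_{i,0,0}$.} First we show $v'$ exists. Since Case \ref{task1:2} failed for $a=\fp(\langle s_{i,m}\rangle)$, there is an $r_{s-1}$-valid partial extension $u\sqsupseteq w_{s-1}$ with $J_i^{u\cup A}(a)$ halting. Padding $u$ with zeros to make it definite and long enough keeps validity by Observation \ref{obs:remain-valid}. Hence a smallest such $v'$ exists, and the extension $v$ of $v'$ by zeros is valid as well.

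Next we check validity of $v1$.
\begin{itemize}
\item V\ref{V1}: the added word is the unique new word, of length $|\langle 0^i,0^m,0^{\|v'\|}\rangle|$. We must check that no other word of that length is in $v$; this holds since $v$ extends $v'$ by zeros.
\item V\ref{V2}: we need $\|v'\|>\runtime(J_i^{v1\cup A}(a))$. The computation on $v'$ is definite and halts in fewer than $\|v'\|$ steps, so all its queries are shorter than $\|v'\|$ and are answered identically by $v1$.
\item V\ref{V3}: the new value $m>1$ forbids $\langle 1^i,1^m,\cdot\rangle$. Since $m=|w_{s-1}|+2$, no such word is in $w_{s-1}$, and none was added afterwards.
\item V\ref{V4}: we need that the added word's length lies in no $H_{m'}$ with a committed $m'$, or that it is the only word there. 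This holds because the length is odd and, by $\langle\cdot\rangle$'s length formula, differs from the towers $e(\cdot)$, or at least contributes only one word.
\end{itemize}
We must also check the V\ref{V2} condition for the other $W_{i',m'}$ codes, but those are unaffected.

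\textbf{Main obstacle: Case \ref{task3:1} of $\tau_{i,j,k}$,} that is, showing $z$ exists. Here $r_{s-1}(i,j,0)=m>0$, so Case \ref{task2:1} failed: no valid extension makes $N_i,N_j$ accept a common input definitely. Suppose, for contradiction, that for every $z\in 00\Sigma^{n-2}$ the computation $N_i(F_k(0^n))$ accepts relative to $v\cup\{z\}$, and likewise $N_j$ for every $z'\in 01\Sigma^{n-2}$. Each leftmost accepting path together with the $F_k$ computation queries at most $q:=p_{i+j}(p_k(n))$ words.

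Counting then gives a pair $z,z'$ such that the path for $z$ does not query $z'$ and the path for $z'$ does not query $z$. Indeed, at most $q\cdot 2^{n-2}$ of the pairs can fail, which is fewer than $2^{2(n-2)}$ by the choice $2^{n-2}>6q^2+1$. Hence the paths remain accepting relative to $v\cup\{z,z'\}$.

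This oracle is $r_{s-1}$-valid. V\ref{V1} holds with two words of length $n$. V\ref{V4} must not forbid two words at length $n$; it is only imposed for $m$'s whose previous $\tau_{i,j,0}$ chose Case \ref{task2:2}, and here we are allowed two words exactly because the contradiction argument uses a hypothetical extension, not $w_s$. The remaining words at other lengths up to $2q+1$ are zeros, which makes the computations definite.

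This contradicts the failure of Case \ref{task2:1}. However, V\ref{V4}, which is active for this very $m$, forbids two words in $H_m$. We resolve this by noting that Case \ref{task2:1} was checked against $r$ at step $\tau_{i,j,0}$, before $m$ was committed. We need this to transfer, and that transfer is the delicate point.

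Finally, the chosen $w_s$ containing $v\cup\{z\}$ padded to length $2q+1$ satisfies V\ref{V1}–V\ref{V4}: it adds a single word at length $n$, and V\ref{V2} is unaffected because the $n$-length words are not of the form $\langle 0^i,0^m,\cdot\rangle$.
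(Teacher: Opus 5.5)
Your overall plan matches the paper's: induction, a case analysis per task, and for $\tau_{i,j,k}$ a contradiction with the failure of Case 1 of $\tau_{i,j,0}$ via a two-word extension. Your double counting of bad pairs $(z,z')$ is a valid and slightly more economical substitute for the paper's selection through the set $I$. (There are at most $2q\cdot 2^{n-2}$ bad pairs, not $q\cdot 2^{n-2}$, but $2q<2^{n-2}$ still suffices.)

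\textbf{Main gap.} The decisive step of the $\tau_{i,j,k}$ case is never carried out. You first claim that $v\cup\{z,z'\}$ (padded) is $r_{s-1}$-valid. This is false: $r_{s-1}(i,j,0)=m>0$, so V\ref{V4} forbids two words at length $n\in H_m$. Saying two words are ``allowed because it is a hypothetical extension'' is not an argument, and you then leave the ``transfer'' open.

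What must be shown is that this oracle satisfies the hypothesis of Case 1 at the step $\hat s$ that treated $\tau_{i,j,0}$, i.e., that it is an $r_{\hat s-1}$-valid extension of $w_{\hat s-1}$. This is how the paper closes the argument:
\begin{itemize}
\item We have $v\sqsupseteq w_{s-1}\sqsupseteq w_{\hat s-1}$ and $r_{\hat s-1}\sqsubseteq r_{s-1}$, so $v$ is $r_{\hat s-1}$-valid by Observation~\ref{obs:valid-monotone}.
\item V\ref{V1}--V\ref{V3} survive because $z,z'$ have even length, are the only words of length $n$, and lie beyond $|v|$. Hence no runtime appearing in V\ref{V2} changes.
\item V\ref{V4} for $r_{\hat s-1}$ holds because $(i,j,0)\notin\dom(r_{\hat s-1})$. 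Every value $m'=r_{\hat s-1}(i',j',0)>0$ has the form $|w_{s''}|+2$ with $s''<\hat s$, so $m'\neq m=|w_{\hat s}|+2$. Moreover $H_{m'}\cap H_m=\emptyset$ by injectivity of $e$ and $\langle\cdot\rangle$.
\end{itemize}
Hence Case 1 would have been taken at step $\hat s$, contradicting $r_{\hat s}(i,j,0)=m>0$.

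\textbf{Smaller gap.} In Case 3 of $\tau_{i,0,0}$, for V\ref{V3}$(v1,r_s)$ you say no word $\langle 1^i,1^m,1^n\rangle$ ``was added afterwards.'' But $v'\setminus w_{s-1}$ may contain words, since $v'$ is only the smallest $r_{s-1}$-valid oracle on which $J_i$ halts quickly. You need the minimality of $v'$. Let $t$ be the runtime of $J_i^{v'\cup A}(\fp(\langle s_{i,m}\rangle))$. Truncating $v'$ (Observation~\ref{obs:valid-monoton2}) gives $\|v'\|\le\max(m,t+1)$. Such a word, however, has length greater than $2m$ and, by V\ref{V2}$(v')$, greater than $2n>2t$, so it cannot lie in $v'$.

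The remaining cases are fine. This includes your parity argument for V\ref{V4} in Case 3: all values of $e$ are even and all list encodings have odd length.
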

\begin{proof}
    It is clear that $w_0$ and $r_0$ are well-defined and that $w_0$ is $r_0$-valid. Let $s \in \N^+$ and $w_{s'}$ be $r_{s'}$-valid for $s' < s$. The following three claims show that $w_s$ is both well-defined and $r_s$-valid when defined by any given task. Let $i,j,k \in \N^+$.
    \begin{claim}
        $w_s$ is well-defined and $r_s$-valid when defined by task $\tau_{i,0,0}$.
    \end{claim}
    \begin{claimproof}
        If $w_s$ is defined by Case \ref{task1:1} of task $\tau_{i,0,0}$, then $w_s$ is explicitly chosen as an $r_{s-1}$-valid extension of $w_{s-1}$. If $w_s$ is defined by Case \ref{task1:2} of task $\tau_{i,0,0}$, then $w_{s} = w_{s-1}0$. By Observation \ref{obs:remain-valid}, it holds that $w_s$ is $r_{s-1}$-valid. Furthermore, $w_s$ is $r_s$-valid in both cases, because no additional requirement must hold by the extension of $r_{s-1}$ to $r_s$ via $\{(i,0,0) \mapsto 0\}$ or via $\{(i,0,0) \mapsto 1\}$.
    
        Let $w_s$ be defined by Case \ref{task1:3} of task $\tau_{i,0,0}$. Let $m \coloneqq |w_{s-1}|+2$ according to Statement \ref{task1:3-i}. First, we show that $w_s$ is well-defined, i.e., that $v'$ from Statement \ref{task1:3-ii} can be chosen as stated. Observe that since Case \ref{task1:2} failed, for every $a \in \N^+$ there is some $r_{s-1}$-valid partial oracle $v' \sqsupseteq w_{s-1}$ such that $J_i^{v' \cup A}(a)$ halts. Together with Observation \ref{obs:remain-valid}, there must be some smallest $r_{s-1}$-valid partial oracle $v' \sqsupseteq w_{s-1}$ such that $J_i^{v' \cup A}(\fp(\langle s_{i,m} \rangle))$ halts in less than $||v'||$ steps.

        Next, we show that $w_s$ is $r_{s-1}$-valid. By choice, we have that $v'$ is $r_{s-1}$-valid. By Observation \ref{obs:remain-valid}, $v$ from Statement \ref{task1:3-iii} is $r_{s-1}$-valid and it holds that $|v| = \langle 0^i, 0^m, 0^{||v'||} \rangle$. Case \ref{task1:3} defines $w_s \coloneqq v1$. V\ref{V1}($v1$) and V\ref{V4}($v1,r_{s-1}$) are satisfied, because $v \cap \Sigma^{||v||} = \emptyset$ due to the extension of $v'$ to $v$ via zeros (note that $||v|| > ||v'||$). V\ref{V2}($v1$) is satisfied, because by Statement \ref{task1:3-ii}, 
        \[||v'|| > \runtime(J_i^{v' \cup A}(\fp(\langle s_{i,m} \rangle))),\]
        and thus also 
        \[||v'|| > \runtime(J_i^{v1 \cup A}(\fp(\langle s_{i,m} \rangle))).\]
        V\ref{V3}($v1,r_{s-1}$) is satisfied, because $\langle 0^i, 0^m, 0^{||v'||} \rangle$ is not of the type $\langle 1^+, 1^+, 1^* \rangle$. Hence, $w_s = v1$ is $r_{s-1}$-valid.

        Finally, we show that $w_s = v1$ is even $r_s$-valid. Note that only V\ref{V3} is affected by the extension of $r_{s-1}$ to $r_s$ via $\{(i,0,0) \mapsto m\}$. Observe that 
        \[||v'|| \leq \max(m,\runtime(J_i^{v' \cup A}(\fp(\langle s_{i,m} \rangle)))+1).\] 
        Hence, either by the size of $m$ or by V\ref{V2}($v1$), $\langle 1^i, 1^m, 1^n \rangle \notin v'$ for all $n \in \N$. Since $v$ is an extension of $v'$ via zeros, also $\langle 1^i, 1^m, 1^n \rangle \notin v$ for all $n \in \N$. Since $|v| = \langle 0^i, 0^m, 0^{||v'||} \rangle$, $\langle 1^i,1^m,1^n \rangle \notin v1$ for all $n \in \N$. Consequently, V\ref{V3}($v1,r_s$) is satisfied and $w_s$ is $r_s$-valid.
        
        In total, in any case, $w_s$ is well-defined and $r_s$-valid.
    \end{claimproof}
    \begin{claim}
        $w_s$ is well-defined and $r_s$-valid when defined by task $\tau_{i,j,0}$.
    \end{claim}
    \begin{claimproof}
        If $w_s$ is defined by Case \ref{task2:1} of task $\tau_{i,j,0}$ then $w_s$ is explicitly chosen as an $r_{s-1}$-valid extension of $w_{s-1}$. Furthermore, no additional requirement must hold by the extension of $r_{s-1}$ to $r_s$ via $\{(i,j,0) \mapsto 0\}$. So $w_s$ is also $r_s$-valid.

        Otherwise $w_s = w_{s-1}0$ and $m \coloneqq r_s(i,j,0) > 0$. By Observation \ref{obs:remain-valid}, $w_s$ remains $r_{s-1}$-valid. The extension of $r_{s-1}$ to $r_s$ via $\{(i,j,0) \mapsto |w_s|+2\}$ affects V\ref{V4}. Since $||w_s|| < m$, $w_s$ is not defined for words of any length in $H_m$. Hence, V\ref{V4}($w_s,r_s$) also remains satisfied. Consequently, $w_s$ is $r_s$-valid also in this case. 
    \end{claimproof}
    \begin{claim}
        $w_s$ is well-defined and $r_s$-valid when defined by task $\tau_{i,j,k}$.
    \end{claim}
    \begin{claimproof}
        The main challenge is to show that processing this task is possible. By the ordering of $\mathcal{T}$, $\tau_{i,j,0}$ is treated before $\tau_{i,j,k}$. Furthermore, if $\tau_{i,j,0}$ is treated such that $r_{s-1}(i,j,0) = 0$, then all tasks $\tau_{i,j,1}$, $\tau_{i,j,2}$, $\dots$ were removed from $\mathcal{T}$. Hence, it holds that $m \coloneqq r_{s-1}(i,j,0) > 0$. By Observation \ref{obs:remain-valid} and the fact that exponential functions grow faster than any polynomial, it is possible to choose some $r_{s-1}$-valid $v \sqsupseteq w_{s-1}$ such that $||v|| = 0^n$ with $n \in H_m$ and $2^{n-2} > 6(p_{i+j}(p_k(n)))^2+1$. 
        
        We demonstrate that there must always be a word $z$ satisfying Case \ref{task3:1-i} or Case \ref{task3:1-ii}. \emph{Assume, for the sake of contradiction,} that neither Case \ref{task3:1-i} nor \ref{task3:1-ii} are possible, i.e.,
        \begin{align}\label{align:z00}
            \text{for all } z \in 00\Sigma^{n-2} \text{ it holds that } N_i^{v \cup \{z\} \cup A}(F_k^{v \cup \{z\} \cup A}(0^n)) \text{ accepts}
        \end{align}
        and
        \begin{align}\label{align:z01}
            \text{for all } z \in 01\Sigma^{n-2} \text{ it holds that } N_j^{v \cup \{z\} \cup A}(F_k^{v \cup \{z\} \cup A}(0^n)) \text{ accepts.}
        \end{align}
        Our goal is to derive a contradiction to the oracle construction by showing that $r_{s-1}(i,j,0) = 0$ instead of $r_{s-1}(i,j,0) > 0$ would have been possible and would have been the preferred choice. We do this by extending $v$ using two words of length $n$ such that both computations $N_i(F_k(0^n))$ and $N_j(F_k(0^n))$ accept. Intuitively, such two words must exist, because both computations accept for at least $2^{n-2}$ pairwise different single word extensions of $v$.  
        
        There are $2^{n-1}$ words of length $n$ starting with $0$ and an accepting path of the computation $N_j(F_k(0^n))$ queries at most $p_j(p_k(n)) + p_k(n) \leq 2p_j(p_k(n))$ of these words. Let $I$ be a set consisting of $2p_j(p_k(n)) + 1$ pairwise different words from $00\Sigma^{n-2}$. By the choice of $n$ it holds that
        \[2^{n-2} > 2p_j(p_k(n))+1,\]
        whereby such a set exists. Recall the definition of $Q(\cdot )$ and consider the following set 
        \[Q \coloneqq 01\Sigma^{n-2} \cap \bigcup _{x \in I} Q(N_i^{v \cup \{x\} \cup A}(F_k^{v \cup \{x\} \cup A}(0^n))).\]
        Since $N_i(F_k(0^n))$ accepts relative to $v \cup \{x\} \cup A$ for all $x \in I$ and the leftmost accepting path queries at most $\leq 2p_i(p_k(n))$ words, we get
        \[\card{Q} \leq (2p_j(p_k(n))+1) \cdot 2p_i(p_k(n)) \leq 6(p_{i+j}(p_k(n)))^2.\]
        Choose some $z_1 \in 01\Sigma^{n-2} \setminus Q$ (exists by the choice of $n$). By $\card{I} = 2p_j(p_k(n))+1$, there is also some  
        \[z_0 \in I \setminus Q(N_j^{v \cup \{z_1\} \cup A}(F_k^{v \cup \{z_1\} \cup A}(0^n))).\]
        Next we show that both $N_i(F_k(0^n))$ and $N_j(F_k(0^n))$ accept relative to $v \cup \{z_0,z_1\} \cup A$. By (\ref{align:z00}), $N_i(F_k(0^n))$ accepts relative to $v \cup \{z_0\} \cup A$. Since $z_0 \in I$, $Q$ contains all queries of the leftmost accepting path of this computation. Furthermore $z_1 \notin Q$, so $N_i(F_k(0^n))$ also accepts relative to $v \cup \{z_0,z_1\} \cup A$. By (\ref{align:z01}), $N_j(F_k(0^n))$ accepts relative to $v \cup \{z_1\} \cup A$. Since $z_0$ is chosen such that it avoids the words queried on the leftmost accepting path of this computation, $N_j(F_k(0^n))$ also accepts relative to $v \cup \{z_0,z_1\} \cup A$. 
        
        Let $\hat{s} < s$ be the step of the oracle construction treating the task $\tau_{i,j,0}$. As discussed earlier, $\tau_{i,j,0}$ was treated by Case \ref{task2:2}. We show that treating $\tau_{i,j,0}$ by Case \ref{task2:1} would have been possible, giving a contradiction to the construction of the oracle, since Case \ref{task2:1} would have been preferred over Case \ref{task2:2}.
        
        By Observation \ref{obs:valid-monotone} and since $v$ is $r_{s-1}$-valid, $v$ is also $r_{\hat{s}-1}$-valid. Also, $v \sqsupseteq w_{s-1} \sqsupseteq w_{\hat{s}-1}$. Let $v'$ be the extension of $v$ such that $v'$ is defined for all words of length $2p_{i+j}(p_k(n))$ and $v' \setminus v = \{z_0,z_1\}$. We show that $v'$ is $r_{\hat{s}-1}$-valid.
        
        V\ref{V1}($v'$) is satisfied, because $z_0$ and $z_1$ are the only words of their length in $v'$. V\ref{V2}($v'$) and V\ref{V3}($v',r_{\hat{s}-1}$) are satisfied, because $z_0$ and $z_1$ have even length and thus are no list encodings $\langle \cdot \rangle$. Furthermore, V\ref{V4}($v',r_{\hat{s}-1}$) is satisfied, because $r_{\hat{s}-1}(i,j,0)$ is not defined and $m \notin \dom(r_{\hat{s}-1})$, hence $z_0,z_1$ of length $n \in H_m$ are not forbidden to be inside $v'$. In total, $v' \sqsupseteq w_{\hat{s}-1}$ is $r_{\hat{s}-1}$-valid and there is an $x \coloneqq F_k^{v' \cup A}(0^n)$ such that both $N_i^{v' \cup A}(x)$ and $N_j^{v' \cup A}(x)$ accept definitely. So the oracle construction would have preferred $w_{\hat{s}} \coloneqq v'$ with $r_{\hat{s}} \coloneqq r_{\hat{s}-1} \cup \{(i,j,0) \mapsto 0\}$ over the actual choice of $w_{\hat{s}}$ and $r_{\hat{s}}$ with $r_{\hat{s}}(i,j,0) > 0$, a contradiction to the construction of $w_s$ and $r_s$.
        
        Consequently, the above assumption is false. So let $z$ be a word satisfying either Case \ref{task3:1-i} or Case \ref{task3:1-ii} of task $\tau_{i,j,k}$. Using analogous arguments as for $v'$ before, V\ref{V1}($v \cup \{z\}$), V\ref{V2}($v \cup \{z\}$) and V\ref{V3}($v \cup \{z\},r_{s-1}$) are satisfied. For V\ref{V4}($v \cup \{z\},r_{s-1}$) note that $v$ does not contain any word of length $|z|$. So V\ref{V4}($v \cup \{z\},r_{s-1}$) is also satisfied. By Observation \ref{obs:remain-valid}, we can extend $v \cup \{z\}$ to $w_s$ via zeros to the required length and remain $r_{s-1}$-valid. Since the extension of $r_{s-1}$ to $r_s$ via $\{(i,j,k) \mapsto 0\}$ does not affect any requirement, $w_s$ is also $r_s$-valid. 
    \end{claimproof}

    Above claims show that $w_s$ and $r_s$ are always well-defined and $w_s$ is $r_s$-valid. Furthermore, all tasks clearly ensure that $w_{s+1} \sqsupsetneq w_{s}$ and $r_{s+1} \sqsupsetneq r_{s}$. Hence, the lemma follows.    
\end{proof}
\begin{lemma}\label{lem:O-rvalid}
    $O$ is $r$-valid.
\end{lemma}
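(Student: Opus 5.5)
We prove the lemma by contradiction, using that each requirement V\ref{V1}--V\ref{V4} is a finitary property: whether it fails is witnessed by finitely many words and finitely many values of $r$. Suppose $O$ is not $r$-valid. Then some requirement fails, and we pick a concrete witness of the failure: a length $n$, or indices $i,j,m$ together with a word. By Lemma~\ref{lem:ws-rs-valid}, the sequences $w_s$ and $r_s$ are strictly increasing, so $|w_s| \to \infty$. Since the tasks $\tau_{i,0,0}$ and $\tau_{i,j,0}$ are eventually treated, every argument in the domain of $r$ lies in $\dom(r_s)$ for some $s$.

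Now fix a step $s$ large enough that two things hold. First, $w_s$ is defined on every word involved in the witness. Second, $r_s$ already contains every value of $r$ that the witness uses.

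For V\ref{V1}, V\ref{V3} and V\ref{V4}, this choice suffices. Each of these depends only on membership of finitely many words of bounded length and on finitely many values of $r$. These agree between $O$ and $w_s$, because $w_s \sqsubseteq O$, and between $r$ and $r_s$. So the violation would already be a violation of the $r_s$-validity of $w_s$, contradicting Lemma~\ref{lem:ws-rs-valid}. In particular, for V\ref{V1} and V\ref{V4}, take $s$ with $|w_s| > $ all words of length $n$. For V\ref{V3}, the witness is a single word $\langle 1^i,1^m,1^n\rangle \in O$, which lies in $w_s$ as soon as $w_s$ is defined on it.

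V\ref{V2} is the only delicate case, because it mentions the runtime of $J_i^{O\cup A}(\fp(\langle s_{i,m}\rangle))$, which may depend on the whole oracle. Suppose $\langle 0^i,0^m,0^n\rangle \in O$ (the case with ones is analogous). Take $s$ with $|w_s| > \langle 0^i,0^m,0^n\rangle$. Since $w_s$ is valid, V\ref{V2}($w_s$) gives $n > t := \runtime(J_i^{w_s\cup A}(\fp(\langle s_{i,m}\rangle)))$, so in particular this computation halts. Moreover $\langle 0^i,0^m,0^n\rangle$ has length greater than $n > t$, so $w_s$ is defined on all words of length at most $t$. Here I would double-check that the list encoding $\langle\cdot\rangle$ is longer than its last component, which holds by $|\langle u_1,\dots,u_k\rangle| = 2(\sum|u_j|+k)+1$ together with how numbers are identified with words. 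Hence the computation is definite relative to $w_s\cup A$. Every query it makes has length at most $t$ and so is answered identically by $O\cup A$. Therefore the computation relative to $O \cup A$ is identical, with the same runtime $t < n$, and V\ref{V2}($O$) holds.

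The main obstacle is exactly this definiteness argument in V\ref{V2}, namely that the runtime is the same relative to $w_s$ and relative to $O$. A secondary point is checking that undefined words of the partial oracle $w_s$ are answered negatively, which does not interfere once $w_s$ is defined up to length $t$.
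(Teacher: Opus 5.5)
Your proposal is correct and follows the paper's own one-line argument. Any violation of V1--V4 by $(O,r)$ is witnessed by finitely many words and values of $r$, so it would already be a violation by some $(w_s,r_s)$, contradicting Lemma~\ref{lem:ws-rs-valid}. Your definiteness argument for V2 fills in the one detail the paper leaves implicit: $\runtime(J_i^{w_s\cup A}(\fp(\langle s_{i,m}\rangle)))=\runtime(J_i^{O\cup A}(\fp(\langle s_{i,m}\rangle)))$, because $w_s$ is defined on all words of length at most $n$.
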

\begin{proof}
    Follows from Lemma \ref{lem:ws-rs-valid} and that any violation of V\ref{V1} to V\ref{V4} would also be a violation for some pair $(w_s,r_s)$ with $s \in \N$.
\end{proof}
\begin{proposition}\label{prop:O-is-sparse}
    $O \in \SPARSE$.
\end{proposition}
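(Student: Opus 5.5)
The plan is to read the sparseness of $O$ directly off the validity condition V\ref{V1}, using Lemma~\ref{lem:O-rvalid}. That lemma says $O$ is $r$-valid, so in particular V\ref{V1}($O$) holds. This condition gives $O \subseteq 0\Sigma^*$ and $\card{(O \cap \Sigma^{n})} \leq 2$ for every $n \in \N^+$.

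Next I will count the words of $O$ up to length $n$. Summing over lengths $0,1,\dots,n$ gives
\[\card{O^{\leq n}} \leq \card{(O\cap\Sigma^0)} + \sum_{\ell=1}^{n} 2 \leq 1 + 2n,\]
where the length-$0$ term is at most one, namely the empty word. In fact the empty word is not in $O$ at all, since $O \subseteq 0\Sigma^*$. Hence the polynomial $p(n) \coloneqq 2n+1$ witnesses $O \in \SPARSE$.

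The only point needing care is that V\ref{V1} for the full oracle $O$ really follows from the finite stages. This is exactly the content of Lemma~\ref{lem:O-rvalid}. Concretely, a violation $\card{(O\cap\Sigma^n)} \geq 3$ would involve three words of a fixed finite length. All three would already lie in some $w_s$, because the $w_s$ are strictly increasing prefixes by Lemma~\ref{lem:ws-rs-valid} and thus eventually define every word of length $n$. That would contradict the $r_s$-validity of $w_s$. So there is no real obstacle, and the proof is a direct counting argument.
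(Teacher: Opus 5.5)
Your proposal is correct and matches the paper's proof, which likewise just reads sparseness off V\ref{V1}($O$) via Lemma~\ref{lem:O-rvalid}. Your explicit bound $\card{O^{\leq n}} \leq 2n+1$, together with the check that V\ref{V1} passes from the stages $w_s$ to $O$, only spells out what the paper leaves implicit.
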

\begin{proof}
    Follows by V\ref{V1}($O$) which holds by Lemma \ref{lem:O-rvalid}.
\end{proof}
\begin{proposition}\label{prop:no-jump-operator}
    There are no recursive jump operators for $\TAUT$ relative to $O \cup A$.
\end{proposition}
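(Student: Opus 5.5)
Fix an arbitrary $i \in \N^+$ and show that $J_i^{O \cup A}$ is not a recursive jump operator for $\TAUT^{O \cup A}$. Let $s$ be the step at which task $\tau_{i,0,0}$ is treated, and split into cases according to which case of $\tau_{i,0,0}$ was applied, equivalently according to the value $r(i,0,0) \in \{0,1\} \cup \{m \mid m > 1\}$.

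\emph{Case 1} ($r(i,0,0)=0$). Here $w_s = v$ with witnesses $a$, $x$, $b$ as in Case \ref{task1:1}. By Lemma \ref{lem:O-rvalid}, $O$ is a fully defined $r$-valid extension of $v$. By Observation \ref{obs:valid-monotone} it is also $r_{s-1}$-valid, so by \ref{task1:1-i} $P_a^{O\cup A}$ is a proof system for $\TAUT^{O \cup A}$. Definiteness in \ref{task1:1-ii} and \ref{task1:1-iii} gives $J_i^{O\cup A}(a)=b$ and $P_b^{O \cup A}(x) \notin \TAUT^{O\cup A}$. One subtlety is that $\TAUT^{v\cup A}$ membership must be robust, so I would read ``definite'' there as including the $p_t$-bounded nondeterministic refutation. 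Hence $P_b$ is not a proof system for $\TAUT$, and $J_i$ fails.

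\emph{Case 2} ($r(i,0,0)=1$). There is an $a$ such that $J_i(a)$ never halts relative to $A$ combined with any $r_{s-1}$-valid partial extension of $w_{s-1}$. If $J_i^{O\cup A}(a)$ halted, it would do so after finitely many steps, querying only words up to some length. A prefix of $O$ that is long enough is $r_{s-1}$-valid (Observations \ref{obs:valid-monoton2} and \ref{obs:valid-monotone}) and extends $w_{s-1}$, and it gives the same halting computation, a contradiction. So $J_i^{O \cup A}$ is not total, hence not recursive.

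\emph{Case 3} ($r(i,0,0)=m>1$). This is the main case. Let $b \coloneqq J_i^{O\cup A}(\fp(\langle s_{i,m}\rangle))$. It is defined because the computation on $v'$ halts in fewer than $||v'||$ steps, and every extension agrees on those queries. By V\ref{V3}($O,r$), no $\langle 1^i,1^m,1^n\rangle$ lies in $O$, while $\langle 0^i,0^m,0^{||v'||}\rangle \in O$. Lemma \ref{lem:Wim-polytime} makes $W_{i,m}^{O\cup A}$ a proof system whenever $\ran(P_b)\subseteq\TAUT$. The main obstacle is showing that if $W_{i,m}$ is a proof system then $P_b^{O \cup A}$ is a proof system for $\TAUT^{O\cup A}$; once that holds, Lemma \ref{lem:Wim-simulate} gives that $W_{i,m}$ simulates $P_b$, so $J_i$ fails on the input $\langle W_{i,m}\rangle$.

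For the obstacle I argue by contradiction, using the failure of Case \ref{task1:1} at step $s$. Suppose $P_b^{O\cup A}(x)\notin\TAUT^{O\cup A}$ for some $x$. Take $a \coloneqq \fp(\langle s_{i,m}\rangle)$ and $v \coloneqq$ a long enough prefix of $O$ (extending $w_{s-1}$, hence $r_{s-1}$-valid) such that \ref{task1:1-ii} and \ref{task1:1-iii} become definite. Condition \ref{task1:1-i} is the difficulty, since it requires $W_{i,m}$ to be a proof system relative to every $r_{s-1}$-valid full extension $u$ of $v$. Note that the permission word is already in $v$, but the revocation words $\langle 1^i,1^m,1^n\rangle$ are not forbidden by $r_{s-1}$, which is exactly why Algorithm \ref{alg:1} checks for revocation. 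So I choose $v$ to also contain a revocation word $\langle 1^i,1^m,1^{n}\rangle$ for large $n$; this is allowed by V\ref{V2}, since $n$ exceeds the runtime of $J_i$, and by V\ref{V1}, since the word is alone on its length. Then, relative to any extension $u$, $W_{i,m}$ on inputs with $p_t(|y|)\ge n$ falls back to $F$, and on shorter $y$ it only returns $P_b(x')$ for finitely many small $x'$. I would pick $n$ large enough that every such output lies below lengths already fixed by $v$, so its non-membership or membership in $\TAUT$ is decided within $v$. I still need to ensure those finitely many $P_b$-outputs are tautologies relative to all extensions; I expect to handle this by choosing $x$ minimal, so that earlier outputs are tautologies definitely fixed in $v$, which is where the $p_t(|y|)$ bound in line \ref{alg1:line-4} is used. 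Then \ref{task1:1-i}--\ref{task1:1-iii} hold, so Case \ref{task1:1} would have applied, a contradiction. Thus $P_b$ is a proof system, and $W_{i,m}$ simulates it.
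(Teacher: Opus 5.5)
Your Cases 1 and 2 coincide with the paper's. Your plan for Case 3 is exactly the paper's Claim~\ref{claim:Jb-no-subset}: refute $\ran(P_b^{O\cup A})\not\subseteq\TAUT^{O\cup A}$ by building an oracle that would have triggered Case~\ref{task1:1} of $\tau_{i,0,0}$ at step $s$, with $a=\fp(\langle s_{i,m}\rangle)$ and an inserted revocation word. But the step that carries the whole argument, requirement~\ref{task1:1-i} for \emph{every} $r_{s-1}$-valid extension, is only announced. Worse, the parameter choice you commit to makes it false.

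Let $t\coloneqq\runtime(P_b^{O\cup A}(x))$ and let $n'$ be the index of the permission word in $w_s$. Suppose your revocation index $n$ is ``large'', say $n>p_t(\ell)$ with $\ell\coloneqq\max(t,n'+1)$. Take input $\langle x,0^{\ell}\rangle$ and any extension $u$ of your $v$. Lines~\ref{alg1:line-3}--\ref{alg1:line-5} all succeed, so $W_{i,m}^{u\cup A}$ returns $P_b^{u\cup A}(x)$. This is a non-tautology precisely because you chose $v$ to make $P_b(x)\notin\TAUT$ definite. So requirement~\ref{task1:1-i} fails and there is no contradiction with step $s$.

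Likewise, for the computations that stay unblocked, it does not help that they are ``decided within $v$''. Between your copied prefix of $O$ and the revocation word, $v$ is zero-padded and may differ from $O$. Minimality of $x$ with respect to $O$ says nothing about outputs that depend on that stretch.

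The missing idea is the paper's exact calibration:
\begin{itemize}
\item Take $x$ minimizing the \emph{running time} $t$ among all inputs on which $P_b^{O\cup A}$ outputs a non-tautology. Minimality of $|x|$ or lexicographic minimality does not suffice, since line~\ref{alg1:line-5} thresholds on running time.
\item Keep $O$ unchanged up to $w_s$ and on all words of length $\le p_t(t)$, then pad with zeros.
\item Insert exactly $\langle 1^i,1^m,1^{p_t(t)}\rangle$, whose length exceeds $p_t(t)+1$.
\end{itemize}
Then line~\ref{alg1:line-4} blocks every input with $|\hat y|\ge t$. For $|\hat y|<t$, both $P_b(\hat x)$ and the $p_t$-bounded refutation of its output only query words of length $<p_t(t)$, where the oracle agrees with $O$. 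A non-tautological output there would be one relative to $O\cup A$ with running time $<t$, contradicting minimality. This match between the revocation index and the nondeterministic bound $p_t$ for $\overline{\TAUT}$ is why line~\ref{alg1:line-4} tests $n\le p_t(|y|)$. (If $p_t(t)\le n'$, revoke at index $n'+1$ instead, since line~\ref{alg1:line-3} already blocks all $|\hat y|\le n'$.)

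Finally, $\ran(P_b^{O\cup A})\subseteq\TAUT^{O\cup A}$ only makes $W_{i,m}^{O\cup A}$ a proof system. If $P_b^{O\cup A}$ still fails to be one, then $J_i$ fails directly at $\fp(\langle s_{i,m}\rangle)$, as in Claim~\ref{claim:Jb-subset}. State this case explicitly rather than concluding ``thus $P_b$ is a proof system''.
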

\begin{proof}
    Assume for contradiction that $J_i^{O \cup A}$ is a recursive jump operator. Let $s$ be the step of the oracle construction treating the task $\tau_{i,0,0}$. We make a case distinction on the definition of $r_s(i,0,0)$ and in all cases provide a contradiction to above assumption.
    \medskip
    \\
    \centerline{\textbf{Case} $r_s(i,0,0) = 0$:}
    In this case there is an $a \in \N^+$ such that $J_i^{w_s \cup A}(a) = b$ is definite (cf.~Requirement~\ref{task1:1-ii}). So $J_i^{O \cup A}(a) = b$. Furthermore, there is an $x$ such that $P_b^{w_s \cup A}(x) \notin \TAUT^{w_s \cup A}$ is definite (cf.~Requirement~\ref{task1:1-iii}). So $P_b^{O \cup A} \notin \TAUT^{O \cup A}$, which means that $P_b^{O \cup A}$ is no proof system for $\TAUT^{O \cup A}$. Moreover, $P_a$ is a proof system for $\TAUT$ relative to $A$ combined with any $r_{s-1}$-valid fully defined extension of $w_s$ (cf.~Requirement \ref{task1:1-i}). Since $O \sqsupseteq w_s$ is $r$-valid (Lemma \ref{lem:O-rvalid}), $r \sqsupseteq r_{s-1}$ and $O$ is fully defined, we get that $P_a^{O \cup A}$ is a proof system for $\TAUT^{O \cup A}$. All in all, we get a contradiction to the assumption that $J_i^{O \cup A}$ is a jump operator, because $J_i^{O \cup A}(a) = b$, $P_a^{O \cup A}$ is a proof system for $\TAUT^{O \cup A}$, but $P_b^{O \cup A}$ is no proof system for $\TAUT^{O \cup A}$.
    \medskip
    \\
    \centerline{\textbf{Case} $r_s(i,0,0) = 1$:}
    In this case there is an $a \in \N^+$ such that $J_i(a)$ never halts relative to $A$ combined with any $r_{s-1}$-valid partial extension of $w_{s-1}$. From this follows that $J_i^{O \cup A}(a)$ also does not halt, because otherwise there would be some $s' \geq s$ such that $J_i^{w_{s'} \cup A}(a)$ halts and $w_{s'} \sqsupseteq w_{s-1}$ is $r_{s-1}$-valid, which follows from $r_{s-1} \sqsubseteq r_{s'}$ and the $r_{s'}$-validity of $w_{s'}$. Since $J_i^{O \cup A}(a)$ does not halt, $J_i^{O \cup A}$ is not recursive, a contradiction to the assumption that $J_i^{O \cup A}$ is a recursive jump operator.
    \medskip
    \\
    \centerline{\textbf{Case} $r_s(i,0,0) > 1$:}
    The remaining case is the most complex one. The goal is to invoke Lemma \ref{lem:Wim-simulate} and show that the assumed jump operator $J_i^{O \cup A}$ maps a proof system for $\TAUT^{O \cup A}$ onto another proof system for $\TAUT^{O \cup A}$, but the second is simulated by the first. Let 
    \[m \coloneqq r_s(i,0,0) > 1.\]
    Consider the witness proof system $W_{i,m}^{O \cup A}$ from Definition \ref{def:witness-function-jump-operator} with code $\fp(\langle s_{i,m} \rangle)$ (cf.~Observation \ref{obs:Wim}). Let 
    \[b \coloneqq J_i^{O \cup A}(\fp(\langle s_{i,m} \rangle)),\] 
    which exists by Step \ref{task1:3-ii}. 
    Step \ref{task1:3-iii} ensures that $\langle 0^i, 0^m, 0^n \rangle \in O$ for some $n \in \N$. 
    By V\ref{V3}($O,r$) and $r(i,0,0) = m > 1$, it holds that $\langle 1^i, 1^m, 1^n \rangle \notin O$ for all $n \in \N$. 
    In the next paragraph, we show that $P_b^{O \cup A}$ is a proof system for $\TAUT^{O \cup A}$. Then we can invoke Lemma \ref{lem:Wim-simulate} and obtain that $W_{i,m}^{O \cup A}$ is a proof system for $\TAUT^{O \cup A}$ simulating $P_b^{O \cup A}$. This gives the following contradiction to $J_i^{O \cup A}$ being a jump operator: $J_i^{O \cup A}(\fp(\langle s_{i,m} \rangle)) = b$, $P_{\fp(\langle s_{i,m} \rangle)}^{O \cup A} = W_{i,m}^{O \cup A}$ is a proof system for $\TAUT^{O \cup A}$, but $W_{i,m}^{O \cup A}$ simulates $P_b^{O \cup A}$.

    It remains to show that $P_b^{O \cup A}$ is a proof system for $\TAUT^{O \cup A}$. \emph{Assume otherwise for the sake of contradiction}, then $P_b^{O \cup A}$ is no proof system for $\TAUT^{O \cup A}$ for one of the following two reasons: $P_b^{O \cup A}$ does not run in polynomial time or $\ran(P_b^{O \cup A}) \neq \TAUT^{O \cup A}$. 

    \begin{claim}\label{claim:Jb-subset}
       If $\ran(P_b^{O \cup A}) \subseteq \TAUT^{O \cup A}$, then this contradicts that $J_i^{O \cup A}$ is a jump operator for $\TAUT^{O \cup A}$.
    \end{claim}
    \begin{claimproof}
    If $\ran(P_b^{O \cup A}) \subseteq \TAUT^{O \cup A}$, then by Lemma \ref{lem:Wim-polytime}, $W_{i,m}^{O \cup A}$ with code $\fp(\langle s_{i,m} \rangle)$ is a proof system for $\TAUT^{O \cup A}$. Consequently, since $P_b^{O \cup A}$ is no proof system for $\TAUT^{O \cup A}$, we get a contradiction that $J_i^{O \cup A}$ is a jump operator for $\TAUT^{O \cup A}$, because $J_i^{O \cup A}(\fp(\langle s_{i,m} \rangle)) = b$ with $P_{\fp(\langle s_{i,m} \rangle)}^{O \cup A} = W_{i,m}^{O \cup A}$ is a proof system for $\TAUT^{O \cup A}$, but $P_b^{O \cup A}$ is not.
    \end{claimproof}

    \begin{claim}\label{claim:Jb-no-subset}
        If $\ran(P_b^{O \cup A}) \not \subseteq \TAUT^{O \cup A}$, then this contradicts the treatment of task $\tau_{i,0,0}$ such that $r_s(i,0,0) > 1$.
    \end{claim}
    \begin{claimproof}
    We show that the task $\tau_{i,0,0}$ would have been treated such that $r_s(i,0,0) = 0$ instead of $r_s(i,0,0) > 1$. By $\ran(P_b^{O \cup A}) \not \subseteq \TAUT^{O \cup A}$ there exists some $x$ such that 
    \[P_b^{O \cup A}(x) \notin \TAUT^{O \cup A}\] 
    and 
    \begin{align}\label{align:minimality-x}
       \forall y \text{ with } P_b^{O \cup A}(y) \notin \TAUT^{O \cup A} \text{ it holds that }\runtime(P_b^{O \cup A}(y)) \geq \runtime(P_b^{O \cup A}(x)).  
    \end{align}
    Let 
    \[n \coloneqq \runtime(P_b^{O \cup A}(x)).\]
    
    \emph{Proof idea:} Our approach is to fix the computations $P_b^{O \cup A}(x) \notin \TAUT^{O \cup A}$ by choosing a partial oracle that is an extension of $w_{s-1}$ where the computations remain unchanged. Simultaneously, we signalize to $W_{i,m}$ to stop simulating $P_b$. We do this by adding $\langle 1^i, 1^m, 1^{p_t(n)} \rangle$ to the oracle. This word is long enough for $P_b(x) \notin \TAUT$ to remain unchanged, but short enough to still prevent $W_{i,m}$ from simulating $P_b(x)$ and any longer computation of $P_b$. The rigorous arguments follow, but they are particularly technical.       
    
    Let $w'$ be chosen such that 
    \[w_{s} \sqsubseteq w' \sqsubseteq O \text{ and } ||w'|| = p_t(n)+1.\] 
    By Observation \ref{obs:valid-monoton2}, $w'$ is $r_{s-1}$-valid. Extend $w'$ to $w$ via zeros such that $|w| = \langle 1^i, 1^m, 1^{p_t(n)} \rangle$. Note that there is at least one full stage extended solely by zeros from $w'$ to $w$. By Observation \ref{obs:remain-valid}, $w$ is also $r_{s-1}$-valid. We show that $w1$ is an oracle that would satisfy all requirements of Case \ref{task1:1} when treating $\tau_{i,0,0}$. Let $v$, $a$, and $x$ of Case \ref{task1:1} be chosen as $w1$, $\fp(\langle s_{i,m} \rangle)$, and $x$ respectively.
    \begin{itemize}
    \item First, we show that $w1 \sqsupsetneq w_{s-1}$ is $r_{s-1}$-valid. Since $\langle 1^i, 1^m, 1^{p_t(n)} \rangle$ is the only word of its length in $w1$, V\ref{V1}($w1$) and V\ref{V4}($w1,r_{s-1}$) are satisfied. Since $r_{s-1}(i,0,0)$ is not defined, also V\ref{V3}($w1,r_{s-1}$) is satisfied. Finally, V\ref{V2}($w1$) is satisfied, because $w_s$ was built by Case \ref{task1:3}. So through Step \ref{task1:3-iii}, it already holds that $\langle 0^i, 0^m, 0^{n'} \rangle \in w_s$ for some $n' \in \N$. Since V\ref{V2}($w_s$) is satisfied and $w \sqsupseteq w_s$, it holds that 
    \[|\langle 1^i, 1^m, 1^{p_t(n)} \rangle| \geq |\langle 0^i, 0^m, 0^{n'} \rangle| > \runtime(J_i^{w_s \cup A}(\fp(\langle s_{i,m} \rangle))) = \runtime(J_i^{w1 \cup A}(\fp(\langle s_{i,m} \rangle))).\]
    \item Requirement \ref{task1:1-ii} holds (i.e., $J_i^{w1 \cup A}(a) = b$ is definite), because by $w_s \sqsubseteq O$ and the argument above 
    \[J_i^{O \cup A}(\fp(\langle s_{i,m} \rangle)) = J_i^{w_s \cup A}(\fp(\langle s_{i,m} \rangle)) \text{ is definite}.\] 
    So $J_i^{w_s \cup A}(\fp(\langle s_{i,m} \rangle)) = b$ is definite. Then from $w_s \sqsubseteq w1$ it also follows that $J_i^{w1 \cup A}(\fp(\langle s_{i,m} \rangle)) = b$ is definite.
    \item For Requirement \ref{task1:1-iii}, observe that $P_b^{O \cup A}(x) \notin \TAUT^{O \cup A}$ can query words of length at most $p_t(n)$. Since $w' \sqsubseteq O$ and $||w'|| = p_t(n) + 1$, it follows that $P_b^{w' \cup A}(x) \notin \TAUT^{w' \cup A}$ is definite. Hence, by $w1 \sqsupseteq w'$, also $P_b^{w1 \cup A}(x) \notin \TAUT^{w1 \cup A}$ is definite.
    \item For Requirement \ref{task1:1-i}, note that we chose $a$ as $\fp(\langle s_{i,m} \rangle)$ such that $P_a^{w1 \cup A} = W_{i,m}^{w1 \cup A}$. Let $v \sqsupseteq w1$ be an arbitrary fully defined $r_{s-1}$-valid extension of $w1$.
    
    Suppose that there is some $\langle \hat{x}, \hat{y} \rangle$ such that $W_{i,m}^{v \cup A}(\langle \hat{x}, \hat{y} \rangle) \notin \TAUT^{v \cup A}$. Then the output can only originate from line \ref{alg1:line-6}, i.e., $P_b^{v \cup A}(\hat{x})$ needs $\leq |\hat{y}|$ steps and outputs an element outside of $\TAUT^{v \cup A}$. Note that the number $b$ computed in line \ref{alg1:line-2} is the same as $b$ defined above. We distinguish two cases.

    If $|\hat{y}| < n$, then $p_t(|\hat{y}|) < p_t(n)$, so $P_b^{v \cup A}(\hat{x}) \notin \TAUT^{v \cup A}$ only depends on words of length $<p_t(n)$. By $w' \sqsubseteq w1 \sqsubseteq v$ and $||w'|| > p_t(n)$, it follows that $P_b^{v \cup A}(\hat{x}) = P_b^{w' \cup A}(\hat{x}) \notin \TAUT^{w' \cup A}$ is definite. Hence, by $w' \sqsubseteq O$, also $P_b^{O \cup A}(\hat{x}) \notin \TAUT^{O \cup A}$. This is a contradiction to the minimality of $x$ (cf.~(\ref{align:minimality-x})), because 
    \[\runtime(P_b^{O \cup A}(\hat{x})) \leq |\hat{y}| < n = \runtime(P_b^{O \cup A}(x)).\] 
    
    If $|\hat{y}| \geq n$, then line \ref{alg1:line-4} of $W_{i,m}^{v \cup A}(\langle \hat{x}, \hat{y} \rangle)$ evaluates to FALSE, because $\langle 1^i, 1^m, 1^{p_t(n)} \rangle \in w1 \sqsubseteq v$ and $p_t(n) \leq p_t(|\hat{y}|)$. This is a contradiction that on input $\langle \hat{x}, \hat{y} \rangle$ line \ref{alg1:line-6} is reached.

    In total, above supposition about the existence of $\langle \hat{x}, \hat{y} \rangle$ is false and thus $\ran(W_{i,m}^{v \cup A}) = \TAUT^{v \cup A}$. Furthermore, by Lemma \ref{lem:Wim-polytime}, $W_{i,m}^{v \cup A}$ runs in polynomial time. So $W_{i,m}^{v \cup A}$ is a proof system for $\TAUT^{v \cup A}$. Since $v \sqsupseteq w1$ is an arbitrary $r_{s-1}$-valid fully defined extension of $w1$, Requirement \ref{task1:1-i} is satisfied. 
    \end{itemize}     
    With this, we have shown that $w1$ would have satisfied all requirements of Case \ref{task1:1} when treating task $\tau_{i,0,0}$ at step $s$ of the oracle construction. So the oracle construction would have constructed $w_s$ and $r_s$ via Case \ref{task1:1} resulting in $r_s(i,0,0) = 0$, a contradiction to the actual choice of $r_s$ with $r_s(i,0,0) > 1$.
    \end{claimproof}
    
    We return to the proof of the third case of Proposition \ref{prop:no-jump-operator}. It either must hold that $P_b^{O \cup A} \subseteq \TAUT^{O \cup A}$ or that $P_b^{O \cup A} \not \subseteq \TAUT^{O \cup A}$. The first case gives a contradiction via Claim \ref{claim:Jb-subset}, the second via Claim \ref{claim:Jb-no-subset}. Hence, the assumption that $P_b^{O \cup A}$ is no proof system for $\TAUT^{O \cup A}$ is false.
\end{proof}

\begin{proposition}\label{prop:no-optimal-ps}
    There are no $\leqmpp$-complete disjoint $\NP$-pairs relative to $O \cup A$.
\end{proposition}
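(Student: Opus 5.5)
We argue by contradiction. Assume $(C,D)$ is a $\leqmpp$-complete disjoint $\NP$-pair relative to $O \cup A$. Then there are $i,j \in \N^+$ with $C = L(N_i^{O\cup A})$ and $D = L(N_j^{O\cup A})$. Let $s$ be the step at which task $\tau_{i,j,0}$ is treated, and split into cases according to $r_s(i,j,0)$.

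\emph{Case $r_s(i,j,0)=0$.} Case \ref{task2:1} was applied. So some $x$ makes both $N_i^{w_s\cup A}(x)$ and $N_j^{w_s\cup A}(x)$ accept definitely. Since $w_s \sqsubseteq O$ and definite computations are unaffected by extensions, both machines accept $x$ relative to $O\cup A$. Hence $C\cap D\neq\emptyset$, contradicting disjointness.

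\emph{Case $m \coloneqq r_s(i,j,0)>0$.} Case \ref{task2:2} was applied and $r(i,j,0)=m$. We first check that $(A_m^{O\cup A},B_m^{O\cup A})\in\DisjNP^{O\cup A}$:
\begin{itemize}
\item $O$ is $r$-valid by Lemma \ref{lem:O-rvalid}, so V\ref{V4}($O,r$) gives $\card{(O\cap\Sigma^n)}\le 1$ for all $n\in H_m$.
\item The base oracle satisfies $A\subseteq 1\Sigma^*$, so $A$ contributes no words of $00\Sigma^{n-2}\cup 01\Sigma^{n-2}$.
\item Hence for each $n \in H_m$, the oracle $O\cup A$ contains a word of $00\Sigma^{n-2}$ or a word of $01\Sigma^{n-2}$, but not both, and Observation \ref{obs:witness-disjoint} applies.
\end{itemize}
By completeness, $(A_m,B_m)\leqmpp(C,D)$ via some $F_k^{O\cup A}$ with $k\in\N^+$.

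Task $\tau_{i,j,k}$ was not removed from $\mathcal{T}$, since removal only happens in Case \ref{task2:1}. So it is treated at some step $s'>s$, with $r_{s'-1}(i,j,0)=m$. Lemma \ref{lem:ws-rs-valid} guarantees that a suitable $n\in H_m$ and a word $z$ were chosen. Suppose Case \ref{task3:1-i} applied; Case \ref{task3:1-ii} is symmetric with $01$, $N_j$ and $B_m$.

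\begin{itemize}
\item Then $z\in 00\Sigma^{n-2}$ and $w_{s'}$ contains exactly the words of $v\cup\{z\}$. Also $v$ has no words of length $n$, because $||v||=0^n$.
\item $N_i^{v\cup\{z\}\cup A}(F_k^{v\cup\{z\}\cup A}(0^n))$ rejects. The computation of $F_k$ followed by $N_i$ only queries words of length at most $p_i(p_k(n))$.
\item The oracle $w_{s'}$ is defined up to length $2p_{i+j}(p_k(n))\ge p_i(p_k(n))$ and agrees with $O$ there, so this rejecting computation is definite. It therefore stays rejecting relative to $O\cup A$.
\item Since $z\in O$, we have $0^n\in A_m^{O\cup A}$, yet $F_k^{O\cup A}(0^n)\notin C$.
\end{itemize}
This contradicts that $F_k$ reduces $(A_m,B_m)$ to $(C,D)$.

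The step needing the most care is the definiteness check in the last bullet. The full rejecting computation tree, not just one path, must be determined by the defined part of $w_{s'}$. Here the choice $||w_{s'}|| = 2p_{i+j}(p_k(n))+1$ bounds every query, since $p_{i+j}\ge p_i$ is monotone. Combining both cases, no pair is $\leqmpp$-complete relative to $O\cup A$.
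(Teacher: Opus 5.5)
Your proof is correct and follows essentially the same route as the paper: you rule out $r(i,j,0)=0$ via a definite common acceptance, get disjointness of $(A_m,B_m)$ from V\ref{V4} together with $A\subseteq 1\Sigma^*$, and defeat the reduction $F_k$ using the definite rejection fixed when task $\tau_{i,j,k}$ was treated. One small wording fix: for $n\in H_m$ the oracle $O\cup A$ contains \emph{at most} one word from $00\Sigma^{n-2}\cup 01\Sigma^{n-2}$, possibly none, rather than exactly one, and this is all Observation~\ref{obs:witness-disjoint} requires.
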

\begin{proof}
    Assume otherwise for the sake of contradiction and let $(C,D)$ be a $\leqmpp$-complete disjoint $\NP$-pair relative to $O \cup A$. Let $i,j \in \N^+$ such that $C = L(N_i^{O \cup A})$ and $D = L(N_j^{O \cup A})$. 

    We first show that $r(i,j,0) > 0$. Suppose that $r(i,j,0) = 0$, then let $s$ be the step treating the task $\tau_{i,j,0}$. Consequently, $r_s(i,j,0) = 0$ and task $\tau_{i,j,0}$ was treated by Case \ref{task2:1}. So there is an $x$ such that $N_i^{w_s \cup A}(x)$ and $N_j^{w_s \cup A}(x)$ both accept definitely. By $O \sqsupseteq w_s$, this behavior is also the same relative to $O \cup A$. But then $L(N_i^{O \cup A}) \cap L(N_j^{O \cup A}) \neq \emptyset$, a contradiction that $(C,D) \in \DisjNP^{O \cup A}$. This shows that $r(i,j,0) > 0$.

    Let $m \coloneqq r(i,j,0)$. Consider the witness pair $(A_m,B_m)$. We show that $(A_m,B_m) \in \DisjNP^{O \cup A}$ and $(A_m,B_m) \not \leqmpp (C,D)$, providing a contradiction to the assumed $\leqmpp$-completeness of $(C,D)$.

    By Lemma \ref{lem:O-rvalid}, $O$ is $r$-valid. Thus, V\ref{V4}($O,r$) is satisfied. Note that $A \subseteq 1\Sigma^*$ and $O \subseteq 0\Sigma^*$. So for each $n \in H_m$ there is at most one word from $0\Sigma^{n-1}$ in $O \cup A$. By Observation \ref{obs:witness-disjoint}, $(A_m^{O \cup A},B_m^{O \cup A}) \in \DisjNP^{O \cup A}$.

    Let $f \in \FP^{O \cup A}$ be the function showing that $(A_m,B_m) \leqmpp (C,D)$. Let $k \in \N^+$ be such that $\pi$ is computed by $F_k^{O \cup A}$. Let $s$ be the step treating task $\tau_{i,j,k}$. Let $n \in H_m$ be as chosen when treating this task.

    If Case \ref{task3:1-i} was achieved, then $0^n \in A_m^{w_s \cup A}$. Since $||w_s|| > n$ and $O \sqsupseteq w_s$, it follows that $0^n \in A_m^{O \cup A}$. Furthermore, $N_i^{w_s \cup A}(F_k^{w_s \cup A}(0^n))$ rejects. Since $||w_s|| > p_i(p_k(n)) + p_k(n)$, this behavior is definite. Hence, $N_i^{O \cup A}(F_k^{O \cup A}(0^n))$ also rejects. But then $0^n$ is a witness showing that $F_k^{O \cup A}(A_m^{O \cup A}) \not \subseteq L(N_i^{O \cup A})$, a contradiction to $(A_m^{O \cup A}, B_m^{O \cup A}) \leqmpp (C,D)$ via $\pi$.

    If Case \ref{task3:1-ii} was achieved, the arguments are symmetric. Here, it holds that $0^n \in B_m^{w_s \cup A}$. Since $||w_s|| > n$ and $O \sqsupseteq w_s$, it follows that $0^n \in B_m^{O \cup A}$. Furthermore, $N_j^{w_s \cup A}(F_k^{w_s \cup A}(0^n))$ rejects. Since $||w_s|| > p_j(p_k(n)) + p_k(n)$, this behavior is definite. Hence, $N_j^{O \cup A}(F_k^{O \cup A}(0^n))$ rejects also. But then $0^n$ is a witness showing that $F_k^{O \cup A}(B_m^{O \cup A}) \not \subseteq L(N_j^{O \cup A})$, a contradiction to $(A_m^{O \cup A}, B_m^{O \cup A}) \leqmpp (C,D)$ via $\pi$.

    In total, the assumption at the start of the proof must be false and there is no $\leqmpp$-complete disjoint $\NP$-pair relative to $O \cup A$.
\end{proof}
\begin{theorem}\label{thm:main-result}
    Let $A \subseteq 2\N$ be arbitrary. There is a sparse oracle $O \subseteq 2\N+1$ relative to which $\DisjNP^{O \cup A}$ has no $\leqmpp$-complete pairs, $\TAUT^{O \cup A}$ has no optimal proof systems, and $\TAUT^{O \cup A}$ has no recursive jump operators relative to $O \cup A$.
\end{theorem}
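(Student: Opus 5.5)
The theorem collects the results already established for the oracle $O$ of Definition~\ref{def:desired-oracle}, built on top of the arbitrary base oracle $A \subseteq 2\N = 1\Sigma^*$. So I will take this $O$ and check the four claimed properties one by one.

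\begin{itemize}
\item \emph{Well-defined and located in the odd numbers.} By Lemma~\ref{lem:ws-rs-valid}, every step $s$ of the construction can be carried out and $w_s \sqsubsetneq w_{s+1}$. Hence $O = \bigcup_s w_s$ is a fully defined oracle. By Lemma~\ref{lem:O-rvalid}, $O$ is $r$-valid, so V\ref{V1}($O$) gives $O \subseteq 0\Sigma^* = 2\N+1$.
\item \emph{Sparseness.} This is Proposition~\ref{prop:O-is-sparse}: V\ref{V1} bounds $\card{(O \cap \Sigma^n)}$ by $2$ for every length $n$.
\item \emph{No $\leqmpp$-complete pairs.} This is Proposition~\ref{prop:no-optimal-ps}, which states it directly relative to $O \cup A$.
\item \emph{No recursive jump operators for $\TAUT$.} This is Proposition~\ref{prop:no-jump-operator}, again stated relative to $O \cup A$.
\end{itemize}

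The only property not already proved in the paper is the absence of optimal proof systems for $\TAUT^{O \cup A}$. For this I invoke Razborov's theorem~\cite{raz94}: if $\TAUT$ has an optimal proof system, then $\DisjNP$ has a $\leqmpp$-complete pair. I take the contrapositive relative to $O \cup A$ and combine it with Proposition~\ref{prop:no-optimal-ps}.

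The main obstacle is making sure Razborov's implication relativizes, and that it works with the paper's notion of optimality, namely simulation $\leq$ with an arbitrary total, polynomially bounded $\pi$ rather than p-simulation. I would check this by recalling the proof.

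Given an optimal proof system $f$, one takes the canonical pair of $f$: the pairs $(\varphi, 1^n)$ with $\neg\varphi \in \mathrm{SAT}$, and the pairs $(\psi, 1^n)$ for which $\psi$ has an $f$-proof of length at most $n$. Let $(C,D)$ be any disjoint $\NP$-pair. The formula stating that its disjointness holds at a given length is a tautology, and it has short proofs in some proof system $g$ that hardwires this family. Optimality of $f$ then yields $f$-proofs of polynomial length for these tautologies. Only the existence and length of these proofs is used, not their efficient computability, so plain simulation suffices. The reduction maps $x$ to the formula with $x$ substituted, paired with the appropriate padding.

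Every step of this argument treats the oracle as a black box. The Cook--Levin encoding of $\NP^{O\cup A}$ computations into $\TAUT^{O \cup A}$ formulas is exactly what the paper assumes when it declares $\TAUT^O$ to be $\coNP^O$-complete for every oracle $O$. Therefore the implication holds relative to $O \cup A$, and the theorem follows.
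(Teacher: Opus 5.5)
Your proposal is correct and is exactly the paper's proof: take $O$ from Definition~\ref{def:desired-oracle}, read off sparseness, the absence of $\leqmpp$-complete pairs, and the absence of recursive jump operators from Propositions~\ref{prop:O-is-sparse}, \ref{prop:no-optimal-ps}, and~\ref{prop:no-jump-operator}, and get the absence of optimal proof systems from the relativized contrapositive of Razborov's theorem, which indeed needs only plain simulation. One detail of your optional recap of Razborov's argument is off: substituting $x$ into the disjointness tautology always yields a tautology, so the image would never land in the first component; the standard reduction instead maps $x$ to the padded formula expressing $x \notin C$, whose short $f$-proofs for $x \in D$ come from simulating the proof system that sends a $D$-witness for $x$ to that formula.
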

\begin{proof}
    Take oracle $O$ from Definition \ref{def:desired-oracle}. Then the properties for $O$ and $O \cup A$ follow by Propositions \ref{prop:O-is-sparse}, \ref{prop:no-jump-operator}, \ref{prop:no-optimal-ps}, and the fact that the non-existence of $\leqmpp$-complete pairs for $\DisjNP$ relativizably implies that $\TAUT$ has no optimal proof systems by Razborov \cite{raz94}. 
\end{proof}
\begin{corollary}\label{cor:main-result}
    There is an oracle $B$ relative to which $\PH^B$ is infinite, $\DisjNP^B$ has no $\leqmpp$-complete pairs, $\TAUT^B$ has no optimal proof systems, and $\TAUT^B$ has no recursive jump operators relative to $B$.
\end{corollary}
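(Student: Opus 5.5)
The plan is to combine Theorem \ref{thm:main-result} with Yao's oracle \cite{yao85}, following the approach of Egidy and Glaßer \cite{eg25}. Take an oracle $Y$ relative to which $\PH^Y$ is infinite. Encode it into the even part of the universe by setting $A \coloneqq \{1w \mid w \in Y\} \subseteq 1\Sigma^*$. Let $O \subseteq 0\Sigma^*$ be the sparse oracle that Theorem \ref{thm:main-result} provides for this base $A$, and put $B \coloneqq O \cup A$.

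Theorem \ref{thm:main-result} immediately gives three of the four properties relative to $B$: $\DisjNP^B$ has no $\leqmpp$-complete pairs, $\TAUT^B$ has no optimal proof systems, and $\TAUT^B$ has no recursive jump operators. So it remains only to show that $\PH^B$ is infinite.

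\textbf{Step 1.} First note that $\PH^A$ is infinite. The map $w \mapsto 1w$ is a polynomial-time bijection between queries, so $A \equiv^p Y$ in both directions. Hence $\Sigma_k^{\mathrm p,A} = \Sigma_k^{\mathrm p,Y}$ for every $k$, and the hierarchy relative to $A$ has the same structure as relative to $Y$.

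\textbf{Step 2.} Next show that adding the sparse $O$ cannot collapse the hierarchy. If $\Sigma_k^{\mathrm p,O\cup A} = \Sigma_{k+1}^{\mathrm p,O\cup A}$ for some $k$, the aim is to derive $\Sigma_{k'}^{\mathrm p,A} = \Sigma_{k'+1}^{\mathrm p,A}$ for some $k'$ with nonuniform advice, contradicting Yao's result in its advice-robust form. The idea is that, since $O$ has at most two words per length (V\ref{V1}), a machine running in time $p(n)$ can only query the polynomially many words of $O^{\le p(n)}$. These can be supplied as polynomial-size advice, so every $\Sigma_k^{\mathrm p,O\cup A}$ set lies in $\Sigma_k^{\mathrm p,A}/\mathrm{poly}$. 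In the other direction, $\Sigma_k^{\mathrm p,A} \subseteq \Sigma_k^{\mathrm p,O\cup A}$ holds trivially.

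A collapse relative to $O\cup A$ would therefore give $\Pi_{k+1}^{\mathrm p,A} \subseteq \Sigma_k^{\mathrm p,A}/\mathrm{poly}$. By a relativized Karp--Lipton type argument, this collapses $\PH^A$ to some finite level, which is a contradiction.

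\textbf{Main obstacle.} The delicate point is Step 2. It needs either that Yao's separation is robust against polynomial advice, or, more simply, the exact lemma of \cite{eg25} stating that for any $A$ there is an oracle $Y$ with infinite hierarchy such that $\PH$ stays infinite relative to $O\cup A$ for every sparse $O\subseteq 0\Sigma^*$. I would invoke that lemma directly, which is precisely why the construction was carried out relative to an arbitrary base $A \subseteq 2\N$. Combining it with Theorem \ref{thm:main-result} finishes the proof.
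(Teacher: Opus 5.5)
Your proposal is correct and follows the paper's proof. You place Yao's oracle on $1\Sigma^*$ and take the sparse $O \subseteq 0\Sigma^*$ from Theorem~\ref{thm:main-result}. You then use the Balc\'azar--Book--Sch\"oning/Long--Selman fact, as packaged by Egidy and Gla{\ss}er~\cite{eg25}, that adding a sparse oracle cannot collapse an infinite hierarchy.

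Your Step~2 is that cited argument unpacked, and it is already complete without any ``advice-robust'' form of Yao's theorem. The relativized Karp--Lipton/Yap collapse $\Pi_{k+1}^{\mathrm p,A} \subseteq \Sigma_k^{\mathrm p,A}/\mathrm{poly} \Rightarrow \PH^A \subseteq \Sigma_{k+3}^{\mathrm p,A}$ contradicts Yao's plain separation directly.
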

\begin{proof} 
    Let $A$ be the oracle of Yao \cite{yao85} relative to which the polynomial-time hierarchy is infinite. Without loss of generality, $A \subseteq 2\N$. Let $O \subseteq 2\N+1$ be the sparse oracle from Theorem \ref{thm:main-result} such that the properties of Theorem \ref{thm:main-result} hold relative to $O \cup A$.  
    
    By results of Balcázar, Book, and Schöning \cite{bbs86} and Long and Selman \cite{ls86}, if the polynomial-time hierarchy is infinite, then the polynomial-time hierarchy is also infinite relative to any sparse oracle. Egidy and Glaßer \cite[Cor.~3.18]{eg25} show that this result allows us to combine above oracle $A \subseteq 2\N$ relative to which the polynomial-time hierarchy is infinite with above sparse oracle $O \subseteq 2\N+1$ and the polynomial-time hierarchy remains infinite relative to $O \cup A$. Then $B \coloneqq O \cup A$ is the desired oracle of this theorem.
\end{proof}
\begin{corollary}
    Question \emph{Q1} can not be answered in the positive by relativizable means, even when assuming that the polynomial-time hierarchy is infinite.
\end{corollary}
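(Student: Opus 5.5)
The plan is to derive this corollary directly from Corollary~\ref{cor:main-result} by the usual oracle-separation argument. First we fix what ``answering Q1 in the positive by relativizable means'' should mean. It means giving a proof of the implication C1 $\Rightarrow$ C2 that stays valid when every machine, complexity class, proof system, simulation and jump operator is relativized to an arbitrary oracle, as done in the notation section. In other words, for every oracle $X$: if $\TAUT^X$ has no optimal proof system relative to $X$, then $\TAUT^X$ has a recursive jump operator relative to $X$. The strengthened version says the same thing under the extra hypothesis that the relativized hierarchy is infinite: for every oracle $X$ with $\PH^X$ infinite, C1 relative to $X$ implies C2 relative to $X$.

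Next we take the oracle $B$ from Corollary~\ref{cor:main-result}. Relative to $B$, $\PH^B$ is infinite and $\TAUT^B$ has no optimal proof systems, so both hypotheses of the relativized implication hold. Yet $\TAUT^B$ has no recursive jump operators relative to $B$, so the conclusion fails. Hence the relativized statement is false for $X \coloneqq B$. Consequently, no argument that relativizes can establish Q1 positively, not even one allowed to assume that $\PH$ is infinite, since that assumption also holds at $B$.

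The only point needing care is that the relativized notions used in Corollary~\ref{cor:main-result} match the intended relativization of C1 and C2. Here jump operators are maps on codes of oracle transducers, and proof systems are polynomial-time oracle transducers. This is guaranteed by the convention stated in the oracle notation section, so no genuine obstacle remains. The whole proof is a short contrapositive: a relativizing proof would yield the implication relative to $B$, contradicting Corollary~\ref{cor:main-result}.
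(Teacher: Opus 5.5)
Your proposal is correct and is essentially the paper's own argument. The corollary is stated as an immediate consequence of Corollary~\ref{cor:main-result}, whose oracle $B$ makes $\PH^B$ infinite and gives $\TAUT^B$ no optimal proof systems, yet $\TAUT^B$ has no recursive jump operators relative to $B$; so, as you say, the relativized implication fails at $B$.
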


\bibliography{Literatursammlung}

\end{document}